\theoremstyle{plain}
\newtheorem{thm}{Theorem}[section]
\newtheorem{theorem}[thm]{Theorem}
\newtheorem{lemma}[thm]{Lemma}
\newtheorem{corollary}[thm]{Corollary}
\newtheorem{claim}[thm]{Claim}
\theoremstyle{definition}
\newtheorem{definition}[thm]{Definition}
\newtheorem*{ass*}{Assumption}
\newtheorem*{assumption*}{Assumption}
\theoremstyle{remark}
\newtheorem{remark}[thm]{Remark}
\crefname{section}{Section}{Sections}
\crefname{lemma}{Lemma}{Lemmas}
\crefname{theorem}{Theorem}{Theorems}
\crefname{appendix}{Appendix}{Appendices}
\crefname{defn}{Definition}{Definitions}
\crefname{conjecture}{Conjecture}{Conjectures}
\crefname{definition}{Definition}{Definitions}
\crefname{fact}{Fact}{Facts}
\crefname{fact}{Fact}{Facts}
\crefname{clm}{Claim}{Claims}
\crefname{claim}{Claim}{Claims}
\crefname{prop}{Proposition}{Propositions}
\crefname{proposition}{Proposition}{Propositions}
\crefname{algocf}{Algorithm}{Algorithms}
\newcommand{\E}{\mathbb{E}}
\newcommand{\Exp}{\E}
\newcommand{\Var}{\mathrm{Var}}
\newcommand{\Prob}{\mathrm{Prob}}
\newcommand{\poly}{\mathrm{poly}}
\newcommand{\dist}{\mathrm{dist}}
\newcommand{\R}{\mathbb{R}}
\definecolor{ao(english)}{rgb}{0.0, 0.5, 0.0}
\definecolor{airforceblue}{rgb}{0.36, 0.54, 0.66}
\definecolor{amber}{rgb}{1.0, 0.75, 0.0}
\definecolor{amber(sae/ece)}{rgb}{1.0, 0.49, 0.0}
\definecolor{amethyst}{rgb}{0.6, 0.4, 0.8}
\definecolor{azure(colorwheel)}{rgb}{0.0, 0.5, 1.0}
\definecolor{bittersweet}{rgb}{1.0, 0.44, 0.37}
\definecolor{bleudefrance}{rgb}{0.19, 0.55, 0.91}
\definecolor{blue(pigment)}{rgb}{0.2, 0.2, 0.6}
\definecolor{blue-violet}{rgb}{0.54, 0.17, 0.89}
\definecolor{britishracinggreen}{rgb}{0.0, 0.26, 0.15}
\definecolor{byzantine}{rgb}{0.74, 0.2, 0.64}
\definecolor{byzantium}{rgb}{0.44, 0.16, 0.39}
\definecolor{cadmiumgreen}{rgb}{0.0, 0.42, 0.24}
\definecolor{darkmagenta}{rgb}{0.55, 0.0, 0.55}
\definecolor{darkmidnightblue}{rgb}{0.0, 0.2, 0.4}
\definecolor{darkpastelpurple}{rgb}{0.59, 0.44, 0.84}
\definecolor{darkspringgreen}{rgb}{0.09, 0.45, 0.27}
\definecolor{deeppink}{rgb}{1.0, 0.08, 0.58}
\definecolor{dollarbill}{rgb}{0.52, 0.73, 0.4}
\definecolor{electricviolet}{rgb}{0.56, 0.0, 1.0}
\definecolor{ferrarired}{rgb}{1.0, 0.11, 0.0}
\definecolor{forestgreen(traditional)}{rgb}{0.0, 0.27, 0.13}
\definecolor{goldenbrown}{rgb}{0.6, 0.4, 0.08}
\definecolor{glaucous}{rgb}{0.38, 0.51, 0.71} 
\definecolor{heliotrope}{rgb}{0.87, 0.45, 1.0}
\definecolor{iris}{rgb}{0.35, 0.31, 0.81}
\newcommand{\ra}{\rightarrow}
\newcommand{\Trelax}{T_{\text{relax}}}
\newcommand{\Tmix}{T_{\text{mix}}}
\title{Optimal Mixing via Tensorization for Random Independent Sets on Arbitrary Trees}
\author{
 Charilaos Efthymiou\thanks{Department of Computer Science, University of Warwick, UK. Email: charilaos.efthymiou@warwick.ac.uk. Research supported by EPSRC NIA, grant EP/V050842/1,  and  Centre of Discrete Mathematics and Applications (DIMAP).}
 \and
 Thomas P. Hayes\thanks{Dept. of Computer Science and Engineering, University at Buffalo.  Email: hayest@gmail.com.}
 \and
 	Daniel \v{S}tefankovi\v{c}\thanks{Department of Computer Science, University of Rochester. Email: stefanko@cs.rochester.edu. Research supported in part by NSF grant CCF-1563757.}
  	\and Eric Vigoda\thanks{Department of Computer Science, University of California, Santa Barbara. Email: vigoda@ucsb.edu. 
	Research supported in part by NSF grant CCF-2147094.}
}
\date{\today}
\begin{document}

\maketitle

\begin{abstract}
We study the mixing time of the single-site update Markov chain, known as the Glauber dynamics, for generating a random independent set of a tree.  Our focus is obtaining optimal convergence results for arbitrary trees.  We consider the more general problem of sampling from the Gibbs distribution in the hard-core model where independent sets are weighted by a parameter $\lambda>0$; the special case $\lambda=1$ corresponds to the uniform distribution over all independent sets.  Previous work of Martinelli, Sinclair and Weitz (2004) obtained optimal mixing time bounds for the complete $\Delta$-regular tree for all $\lambda$.  
However, Restrepo, Stefankovic, Vera, Vigoda, and Yang (2014) showed that for sufficiently large $\lambda$ there are bounded-degree trees where optimal mixing does not hold.
Recent work of Eppstein and Frishberg (2022) proved a polynomial mixing time bound for the Glauber dynamics for arbitrary trees, and more generally for graphs of bounded tree-width.

We establish an optimal bound on the relaxation time (i.e., inverse spectral gap) of $O(n)$ for the Glauber dynamics for unweighted independent sets on arbitrary trees.  
We stress that our results hold for arbitrary trees and there is no dependence on the maximum degree $\Delta$.  Interestingly, our results extend (far) beyond the uniqueness threshold which is on the order $\lambda=O(1/\Delta)$.  Our proof approach is inspired by recent work on spectral independence.  In fact, we prove that spectral independence holds with a constant independent of the maximum degree for any tree, but this does not imply mixing for general trees as the optimal mixing results of Chen, Liu, and Vigoda (2021) only apply for bounded degree graphs.  We instead utilize the combinatorial nature of independent sets to directly prove approximate tensorization of variance via a non-trivial inductive proof.

\end{abstract}

\section{Introduction}

This paper studies the mixing time of the Glauber dynamics for the  hard-core model assuming that the underlying graph is an {\em arbitrary} tree.
In the hard-core model, we are given a graph $G=(V,E)$ and an activity $\lambda>0$.  The model is defined on the collection of all independent sets of $G$ (regardless of size), which we denote as $\Omega$.
 Each independent set $\sigma\in\Omega$ is assigned a weight $w(\sigma)=\lambda^{|\sigma|}$ where $|\sigma|$ is the number of vertices contained in the independent set $\sigma$.   
 The Gibbs distribution $\mu$ is defined on~$\Omega$: for $\sigma\in\Omega$, let $\mu(\sigma)=w(\sigma)/Z$ where $Z=\sum_{\tau\in\Omega} w(\tau)$ is known as the partition function.  When $\lambda=1$ then every independent set has weight one and hence the Gibbs distribution $\mu$ is the uniform distribution over (unweighted) independent sets.

 Our goal is to sample from $\mu$ (or estimate $Z$) in time polynomial in $n=|V|$.  Our focus is on trees.  These sampling and counting problems are computationally easy on trees using dynamic programming algorithms.  Nevertheless, our interest is to understand the convergence properties of a simple Markov Chain Monte Carlo (MCMC) algorithm known as the Glauber dynamics for sampling from the Gibbs distribution.

The Glauber dynamics (also known as the Gibbs sampler) is the simple single-site update Markov chain for sampling from the Gibbs distribution of a graphical model.   For the hard-core model with activity $\lambda$, the transitions $X_t\rightarrow X_{t+1}$ of the Glauber dynamics are defined as follows: first, choose a random  vertex $v$.  Then, with probability $\frac{\lambda}{1+\lambda}$ set $X'=X_t\cup\{v\}$ and with the complementary probability set $X'=X_t\setminus \{v\}$.  If  $X'$ is an independent set,  then  set $X_{t+1}=X'$ and otherwise set $X_{t+1}=X_t$.

We consider two standard notions of convergence to stationarity.  The {\em relaxation time} is the inverse spectral gap, i.e., $(1-\lambda^*)^{-1}$ where $\lambda^*=\max\{\lambda_2,|\lambda_N|\}$ and $1 = \lambda_1 > \lambda_2\geq\dots\geq\lambda_N>-1$ are the eigenvalues of the transition matrix $P$ for the Glauber dynamics.  The relaxation time is a key quantity in the running time for approximate counting algorithms (see, e.g., {\v{S}}tefankovi\v{c}, Vempala, and Vigoda~\cite{SVV}).   The {\em mixing time} is the number of steps, from the worst initial state, to reach within total variation distance $\leq 1/2e$ of the stationary distribution, which in our case
is the Gibbs distribution $\mu$.  

We say that $O(n)$ is the optimal relaxation time and that $O(n\log{n})$ is the optimal mixing time (see Hayes and Sinclair~\cite{HS07} for a matching lower bound for any constant degree graph).  
Here, $n$ denotes the size of the underlying graph.  More generally, we say the Glauber dynamics is rapidly mixing when the mixing time is $\poly(n)$.

We establish  bounds on the  mixing time of the Glauber dynamics by means of  {\em approximate tensorization  inequalities} for the variance  
 of the hard-core model. Interestingly, our
analysis utilizes nothing further than the inductive nature of the tree, e.g., we do not make any assumptions about spatial mixing properties of the 
Gibbs distribution.  As a consequence, the bounds we obtain have no dependence on the maximum degree of the graph. 

To be more specific we derive the following two group of results: We  establish approximate tensorization of variance
of the hard-core model on the tree for all  $\lambda<1.1$. This implies  {\em optimal}  $O(n)$ relaxation time for the Glauber dynamics.   Notably 
this also includes the uniform distribution over independent sets, i.e., $\lambda=1$. 

We can now state our main results.

\begin{theorem} \label{thm:relaxation}
For any $n$-vertex tree, for any $\lambda<1.1$ the Glauber dynamics for sampling $\lambda$-weighted independent sets in the hard-core model has an optimal relaxation time of $O(n)$.
\end{theorem}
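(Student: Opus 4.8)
The plan is to prove \Cref{thm:relaxation} by establishing \emph{approximate tensorization of variance} for the hard-core model on an arbitrary tree, with a constant independent of the maximum degree, and then invoking the standard equivalence between approximate tensorization and the spectral gap (the factor-of-$n$ loss in that equivalence is exactly what produces the $O(n)$ relaxation time). Recall that approximate tensorization of variance with constant $C$ asserts that for every $f$, $\Var_\mu(f) \le C \sum_{v\in V} \E_\mu\!\left[\Var_{\mu}(f \mid \sigma_{V\setminus\{v\}})\right]$; once this holds with $C=O(1)$, a single-site block factorization argument (cf.\ the discussion of Chen--Liu--Vigoda and earlier work on tensorization) gives spectral gap $\Omega(1/(Cn))$, hence relaxation time $O(n)$. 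So the entire content is the tensorization inequality, and the restriction $\lambda<1.1$ should enter only through the inductive estimates, not through any uniqueness or spatial-mixing hypothesis.

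The approach to the tensorization inequality is a \emph{direct induction on the tree structure}. Root the tree at an arbitrary vertex $r$ with children subtrees $T_1,\dots,T_d$. Conditioning on the state of $r$ decomposes the Gibbs distribution on $T$ into a product of the (conditional) hard-core distributions on the $T_i$. The natural recursive skeleton is: (i) use a two-step decomposition of the variance, first tensorizing over the ``block'' $\{r\}$ versus $V\setminus\{r\}$, i.e.\ writing $\Var_\mu(f)$ in terms of $\Var$ of the conditional expectation given $\sigma_r$ plus the expected conditional variance given $\sigma_r$; (ii) on the event $\sigma_r = \text{unoccupied}$, the conditional measure factorizes over the $T_i$ and we can apply the induction hypothesis to each subtree; (iii) on the event $\sigma_r = \text{occupied}$, all neighbors of $r$ are forced out, so the measure factorizes over the grandchildren subtrees, and again induction applies; (iv) the cross term --- the variance of $\sigma_r \mapsto \E_\mu[f\mid \sigma_r]$, a function of a single bit --- must be charged to the single-site term at $r$ (or nearby), and this is where the hard-core combinatorics (the ratio of the two conditional partition functions, controlled via the standard tree recursion for the occupation probability) must be leveraged to get a bound uniform in $d$. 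The key algebraic fact one expects to exploit is that the marginal probability that $r$ is occupied is $\lambda \prod_i (1 - p_i) / (1 + \lambda\prod_i(1-p_i))$-type expression where $p_i$ are the children's occupation probabilities, and such products behave well under the induction.

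The main obstacle --- and the reason a naive induction fails --- is that the decomposition in steps (ii)--(iii) has the two sub-cases living on \emph{different} subtree collections (children vs.\ grandchildren), so the induction hypothesis cannot be applied in a single uniform form; one needs a carefully chosen inductive statement, probably a weighted or ``two-parameter'' strengthening of approximate tensorization, that is simultaneously robust to (a) conditioning the root on either value and (b) the arithmetic of combining the per-subtree bounds without the constant $C$ blowing up multiplicatively with depth. Getting the constant to close --- i.e.\ proving that the recursion $C \mapsto \text{(something involving } C, \lambda, d)$ has a fixed point bounded uniformly in $d$ for all $\lambda < 1.1$ --- is the delicate quantitative heart of the argument, and I expect it to require splitting on whether a subtree's occupation probability is small or bounded away from the critical value, together with convexity/monotonicity estimates on the one-variable variance term. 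The threshold $1.1$ presumably arises as the largest $\lambda$ for which this fixed-point computation goes through with the particular potential function chosen; pushing beyond it is left open. I would also separately record the spectral independence bound (constant independent of $\Delta$) mentioned in the abstract, since it follows from similar tree recursions and is of independent interest, but it is not on the critical path for \Cref{thm:relaxation} because the Chen--Liu--Vigoda machinery needs bounded degree.
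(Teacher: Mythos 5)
Your high-level plan is correct: the theorem is indeed obtained by proving approximate tensorization of variance with a constant independent of $\Delta$, and then invoking the standard implication $VT(C)\Rightarrow \Trelax\le Cn$. You are also right that the constant $1.1$ emerges only from closing a quantitative fixed-point estimate inside the induction, not from any spatial-mixing hypothesis. However, the specific inductive decomposition you propose --- rooting at $r$ and splitting on whether $r$ is occupied, then factorizing over children subtrees in one case and grandchildren subtrees in the other --- is not the one the paper uses, and the difficulty you yourself flag (the two branches of the case analysis live over incompatible subtree collections, so the induction hypothesis cannot be applied uniformly) is a genuine gap that your sketch does not resolve. You gesture at a ``two-parameter strengthening'' but do not specify it, and the children/grandchildren mismatch is precisely the obstacle that makes a naive root-based recursion hard to close on arbitrary (non-regular) trees.

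The paper sidesteps this entirely by inducting on \emph{leaf removal} rather than on root decomposition, and by working in the more general model where each vertex $w$ carries its own fugacity $\lambda_w\le 1.1$. Concretely: remove a degree-$1$ vertex $v$ with neighbor $u$ to get a smaller tree $T$, and replace $\lambda_u$ by $\lambda_u/(1+\lambda'_v)$ so that the marginal of $\mu'$ on $V\setminus\{v\}$ is exactly the hard-core measure $\mu$ on $T$ with the modified fugacities. The inductive statement is a \emph{weighted} tensorization $\Var(f')\le\sum_x F(\lambda'_x)\,\mu'(\Var_x(f'))$ with $F(x)=1000(1+x)^2e^{1.3x}$, so the per-vertex constant tracks the (shrinking) local fugacity. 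The heart of the argument is a one-step comparison lemma (Lemma 3.3 / \cref{lem:exp-induct}) bounding $F(\lambda_u)\mu(\Var_u(f))$ by $(F(\lambda'_v)-1)\mu'(\Var_v(f'))+F(\lambda'_u)\mu'(\Var_u(f'))$; this is where the scalar inequality involving $s_1,s_2$ with $s_1s_2\ge 1$ lives, and the bound $\lambda<1.1$ is exactly the point where that inequality stops being provable with this choice of $F$. This leaf-removal scheme reduces the tree size by one vertex per step and never needs to compare children vs.\ grandchildren, which is why it closes cleanly. Your instinct that a ``potential function'' is needed is correct --- $F(\cdot)$ plays exactly that role --- but the decomposition it decorates has to be the one-leaf-at-a-time one, not the root-splitting one.
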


We believe the optimal mixing results of  \cref{thm:relaxation} are related to the reconstruction threshold, which we describe now.   
 Consider the complete $\Delta$-regular tree of height $h$; this is the rooted tree where all nodes at distance $\ell<h$ from the root have $\Delta-1$ children and all nodes at distance~$h$ from the root are leaves.   
 We are interested in how the configuration at the leaves affects the configuration at the root.

 Consider fixing an assignment/configuration $\sigma$ to the leaves (i.e., specifying which leaves are fixed to occupied and which are unoccupied), we refer to this fixed assignment to the leaves as a boundary condition~$\sigma$.  Let $\mu_\sigma$ denote the Gibbs distribution conditional on this fixed boundary condition~$\sigma$, and let $p_\sigma$ denote the marginal probability that the root is occupied in~$\mu_\sigma$.
 
 The uniqueness threshold $\lambda_c(\Delta)$ measures the affect of the {\em worst-case} boundary condition on the root.  For all $\lambda<\lambda_c(\Delta)$, all $\sigma\neq\sigma'$, in the limit $h\rightarrow\infty$, we have $p_\sigma=p_\sigma'$; this is known as the (tree) uniqueness region.  In contrast,  for $\lambda>\lambda_c(\Delta)$ there are pairs $\sigma\neq\sigma'$ (namely, all even occupied vs. odd occupied) for which the limits are different; this is the non-uniqueness region.  The uniqueness threshold is at $\lambda_c(\Delta)=(\Delta-1)^{\Delta-1}/(\Delta-2)^{\Delta} = O(1/\Delta)$.
 
 In contrast, the reconstruction threshold $\lambda_r(\Delta)$ measures the affect on the root of a random/typical boundary condition.  In particular, we fix an assignment $c$ at the root and then generate the Gibbs distribution via an appropriately defined broadcasting process. Finally, we fix the boundary configuration~$\sigma$ and ask whether, in the conditional Gibbs distribution $\mu_\sigma$, the root has a bias towards the initial assignment $c$.  The non-reconstruction region $\lambda<\lambda_r(\Delta)$ corresponds to when we cannot infer the root's initial value, in expectation over the choice of the boundary configuration $\sigma$ and in the limit $h\rightarrow\infty$, see Mossel~\cite{Mossel:survey} for a more complete introduction to reconstruction.
 
 The reconstruction threshold is not known exactly but close bounds were established by  Bhatnagar, Sly, and Tetali~\cite{BST16} and Brightwell and Winkler~\cite{BW}; they showed that for constants $C_1,C_2>0$ and sufficiently large $\Delta$:
 $C_1\log^2{\Delta}/\log\log{\Delta}\leq \lambda_r(\Delta)\leq C_2\log^2{\Delta}$, and hence $\lambda_r(\Delta)$ is ``increasing asymptotically'' with $\Delta$ whereas the uniqueness threshold is a decreasing function of~$\Delta$.
  Martin~\cite{Martin} showed that $\lambda_r(\Delta)>e-1$ for all~$\Delta$. 
 As a consequence, we conjecture that \cref{thm:relaxation} holds for all trees for all $\lambda<e-1$, which is close to the bound we obtain.  A slowdown in the reconstruction region is known:
 Restrepo, {\v{S}}tefankovi\v{c}, Vera, Vigoda, and Yang \cite{RSVVY} showed that there are trees for which there is a polynomial slow down for $\lambda>C$ for a constant $C>0$; an explicit  constant $C$ is not stated in \cite{RSVVY} but using the Kesten-Stigum bound one can show $C\approx 28$ (by considering binary trees).

For general graphs the  appropriate threshold is the uniqueness threshold, which is $\lambda_c(\Delta)=O(1/\Delta)$.  
In particular, for bipartite random $\Delta$-regular graphs the Glauber dynamics has optimal mixing in the uniqueness region by Chen, Liu and Vigoda~\cite{CLV21}, and is exponentially slow in the non-uniqueness region by Mossel, Weitz, and Wormald~\cite{MWW07} (see also \cite{GSV16}).
Moreover, for general graphs there is a computational phase transition at the uniqueness threshold: optimal mixing on all graphs of maximum degree $\Delta$ in the uniqueness region~\cite{CLV21,CFYZ22,CE22}, and NP-hardness to approximately count/sample in the non-uniqueness region by Sly~\cite{Sly10} (see also, \cite{GSV16,SS14}). 

There are a variety of mixing results for the special case on trees, which is the focus of this paper.  In terms of establishing optimal mixing time bounds for the Glauber dynamics, previous results only applied to complete, $\Delta$-regular trees.  Seminal work of Martinelli, Sinclair, and Weitz~\cite{MSW03,MSW04} proved optimal mixing on complete $\Delta$-regular trees for all~$\lambda$.  The intuitive reason this holds for all~$\lambda$ is that the complete tree corresponds to one of the two extremal phases (all even boundary or all odd boundary) and hence it does not exhibit the phase co-existence which causes   mixing. 
 As mentioned earlier, \cite{RSVVY} shows that   there is a fixed assignment~$\tau$ for the leaves of the complete, $\Delta$-regular tree so that the mixing time slows down in the reconstruction region; intuitively, this boundary condition~$\tau$ corresponds to the assignment obtained by the broadcasting process.

For more general trees the following results were known.  A classical result of 
Berger, Kenyon, Mossel and Peres~\cite{BKMP}
proves polynomial mixing time for trees with constant maximum degree~\cite{BKMP}. 
A very recent result of Eppstein and Frishberg~\cite{EF22} proved polynomial mixing time $n^{C(\lambda)}$ of the Glauber 
dynamics for graphs with bounded tree-width which includes arbitrary trees, however the polynomial exponent is 
$C(\lambda)=O(1+|\log(\lambda)|)$ for trees; see more recent work of Chen~\cite{Zongchen-SODA} for further improvements. For other combinatorial models, rapid mixing for the Glauber dynamics on trees 
with bounded maximum degree was established for $k$-colorings in~\cite{LMP09} and edge-colorings in~\cite{DHP20}.

Spectral independence is a powerful notion in the analysis of the convergence rate of Markov Chain Monte Carlo (MCMC) algorithms.
 For independent sets on an $n$-vertex graph $G=(V,E)$, spectral independence considers the $n\times n$ pairwise influence matrix ${\mathcal{I}}_G$ where ${\mathcal{I}}_G(v,w)=\Prob_{\sigma\sim\mu}(v\in\sigma\ |\ w\in\sigma)-\Prob_{\sigma\sim\mu}(v\in\sigma\ |\ w\notin\sigma)$; this matrix is closely related to the vertex covariance matrix.  We say that spectral independence holds if the maximum eigenvalue of ${\mathcal{I}}_{G'}$ for all vertex-induced subgraphs $G'$ of $G$ are bounded by a constant.
 Spectral independence was introduced by Anari, Liu, and Oveis Gharan~\cite{ALO20}.  Chen, Liu, and Vigoda~\cite{CLV21} proved that spectral independence, together with a simple condition known as marginal boundedness which is a lower bound on 
 the marginal probability of a valid vertex-spin pair, implies optimal mixing time of the Glauber dynamics for {\em constant-degree graphs}.  This has led to a flurry of optimal mixing results, e.g.,~\cite{CLV21zerofree,BCCPSV22,Liu21,CLMM23,CG23}.

 The limitation of the above spectral independence results is that they only hold for graphs with constant maximum degree~$\Delta$.  
 There are several noteworthy results that achieve 
 a stronger form of spectral independence which establishes optimal mixing for unbounded degree graphs~\cite{CFYZ22,CE22}; however all of these results for general graphs only achieve rapid mixing in the tree uniqueness region which corresponds to $\lambda=O(1/\Delta)$ whereas we are aiming for $\lambda=\Theta(1)$. 

The inductive approach we use to establish approximate tensorization inequalities can also be utilized to establish spectral independence.  
In fact, we show that spectral independence holds for any tree when $\lambda<1.3$, including the case where $\lambda=1$,
see  \cref{sec:SI}.   Applying the results of Anari, Liu, and Oveis Gharan~\cite{ALO20} we obtain a $\poly(n)$ bound on the mixing time, but
with a large constant in the exponent of $n$.
For constant degree trees we obtain the following optimal mixing result by applying the results of Chen, Liu, and Vigoda~\cite{CLV21} (see also~\cite{BCCPSV22,CFYZ22,CE22}).
\begin{theorem}
    \label{thm:SI-mixing}
For all constant $\Delta$, all $\lambda\leq 1.3$, for any tree $T$ with maximum degree $\Delta$, the Glauber dynamics for sampling $\lambda$-weighted independent sets has an optimal mixing time of $O(n\log{n})$.
\end{theorem}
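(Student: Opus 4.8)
\emph{Proof idea.} The plan is to derive \cref{thm:SI-mixing} as a black-box consequence of the optimal mixing framework of Chen, Liu, and Vigoda~\cite{CLV21}, for which we need exactly two inputs: (i) \emph{spectral independence} for the hard-core model on any tree, with a constant $\eta=\eta(\lambda)$ that does \emph{not} depend on the maximum degree, valid for all $\lambda\le 1.3$; and (ii) \emph{marginal boundedness}, a uniform positive lower bound $b=b(\lambda,\Delta)$ on the conditional marginal probability of every feasible vertex-spin pair. Given (i) and (ii), and using that the tree has constant maximum degree $\Delta$ and that $\lambda$ is a fixed constant, the main theorem of~\cite{CLV21} (see also~\cite{BCCPSV22,CFYZ22,CE22}) directly gives mixing time $O(n\log n)$, where the implied constant may depend on $\lambda$, $\Delta$, $\eta$, and $b$.

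Input (i) is the content of \cref{sec:SI}. We first observe that for the hard-core model, conditioning on any feasible pinning is equivalent to passing to a vertex-induced subgraph (pinning a vertex to ``unoccupied'' deletes it; pinning it to ``occupied'' deletes it together with its neighbors), so it suffices to bound the largest eigenvalue of the pairwise influence matrix $\mathcal{I}_{T'}$ over all induced subgraphs $T'$ of the tree $T$. Each such $T'$ is a forest, and $\mathcal{I}_{T'}$ is block-diagonal over its connected components, so we may assume $T'$ is a subtree. We then bound the spectral radius of $\mathcal{I}_{T'}$ by its $\ell_\infty$ operator norm $\max_v \sum_w |\mathcal{I}_{T'}(v,w)|$. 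Fixing $v$, root $T'$ at $v$; for a child $u$ of the current root let $a_u$ be the total absolute influence of $v$ on the subtree hanging from $u$. Differentiating the standard hard-core tree recursion that expresses the occupation marginal of a node in terms of those of its children yields a recursion for the $a_u$ in terms of the corresponding quantities at the children of $u$, with a multiplicative factor equal to the absolute value of the derivative of the hard-core tree map. Because every occupation marginal lies in the interval $(0,\lambda/(1+\lambda))$, this per-edge factor is small, and the crux is to show that for $\lambda\le 1.3$ the contraction it provides beats the branching of the tree, so that $\sum_w |\mathcal{I}_{T'}(v,w)|$ is bounded by an absolute constant, uniformly over all trees, all induced subtrees, and all choices of $v$. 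This is the same inductive mechanism used for approximate tensorization in the proof of \cref{thm:relaxation}, applied to the first-order (linearized) quantity.

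Input (ii) is elementary. Fix a vertex $v$ and a feasible pinning $\tau$ of a vertex subset not containing $v$; averaging over the configuration on the vertices outside the closed neighborhood of $v$, it suffices to lower bound the conditional probabilities at $v$ in the hard-core model on the star formed by $v$ and its unpinned neighbors. A direct computation shows the conditional probability that $v$ is unoccupied is at least $1/(1+\lambda)$, and, whenever it is feasible for $v$ to be occupied, the conditional probability that $v$ is occupied is at least $\lambda/(\lambda+(1+\lambda)^{\Delta})$. These bounds hold verbatim for every induced subgraph, so the model is $b$-marginally bounded with $b=\lambda/(\lambda+(1+\lambda)^{\Delta})$, a positive constant since $\lambda$ and $\Delta$ are.

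The only real difficulty is in (i): designing the inductive quantity and, if needed, a suitable reweighting of the influences so that the recursion contracts all the way up to $\lambda=1.3$ rather than merely in the uniqueness window $\lambda=O(1/\Delta)$; this is exactly where the combinatorial structure of independent sets on a tree enters. It is also why the statement is restricted to constant $\Delta$: the quantitative bound of~\cite{CLV21} deteriorates with $\Delta$, so obtaining a degree-free mixing time (rather than the degree-free $O(n)$ relaxation time of \cref{thm:relaxation}) would require going beyond this black-box reduction.
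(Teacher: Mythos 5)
Your high-level reduction is exactly the paper's: invoke the optimal-mixing framework of Chen, Liu, and Vigoda~\cite{CLV21} with two inputs, spectral independence and marginal boundedness; your marginal-boundedness computation and the resulting constant $b\geq \min\{1,\lambda\}/[\lambda+(1+\lambda)^\Delta]$ agree with the paper. The gap is in how you propose to prove spectral independence.

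You propose to bound the spectral radius of the influence matrix $\mathcal{I}_{T'}$ by its $\ell_\infty$ operator norm $\max_v \sum_w |\mathcal{I}_{T'}(v,w)|$ and to show this row-sum is a $\Delta$-independent constant by rooting at $v$ and recursing via the derivative of the hard-core tree map. This cannot work. Take $T'$ to be a star with center $v$ and $\Delta$ leaves, all with fugacity $\lambda$. For each leaf $c$, $\mathrm{Inf}(v\to c)=\mu(c\in I\mid v\in I)-\mu(c\in I\mid v\not\in I)=0-\lambda/(1+\lambda)$, so the row of $v$ sums (in absolute value) to $\Delta\cdot\lambda/(1+\lambda)$, which is $\Theta(\Delta)$. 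The $\ell_\infty$ norm simply does not see the dramatic asymmetry of influences (here $\mathrm{Inf}(c\to v)$ is exponentially small in $\Delta$, but $\mathrm{Inf}(v\to c)$ is a constant), which is exactly what makes the spectral radius small while the row sum is large. No recursion on ``total absolute influence of $v$ on subtrees'' will fix this, because for the star there is no decay along paths: the influences from the root are all equal. This is precisely why row-sum/total-influence arguments for spectral independence are typically restricted to the uniqueness regime $\lambda=O(1/\Delta)$, whereas the theorem requires $\lambda=\Theta(1)$ with no degree dependence in $\eta$.

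The paper's Section~\ref{sec:SI} avoids this by working with the \emph{symmetrized} influence matrix $M=D^{1/2}\mathcal{I}_G D^{-1/2}$, where $D=\mathrm{diag}\big(\mu(v\in I)\mu(v\notin I)\big)$. This conjugation preserves the spectrum and turns the $(u,v)$-entry into $\pm\sqrt{\mathrm{Inf}(u\to v)\mathrm{Inf}(v\to u)}$, the geometric mean of the two one-directional influences, which \emph{does} decay appropriately (for the star, the $v$-row of $M$ has $\Delta$ entries of size roughly $(1+\lambda)^{-\Delta/2}$). The actual bound is not a norm bound at all but the semidefinite inequality $M\preceq\mathrm{diag}\big(400/(1.3-\lambda_v^{0.78})\big)$ of Lemma~\ref{lep2}, proved by a leaf-removal induction: removing a leaf $v$ adjacent to $u$ and reducing $\lambda_u$ to $\lambda_u/(1+\lambda_v)$ makes $M'$ a principal submatrix of $M$, and the induction step reduces to checking a $2\times 2$ positive-semidefiniteness condition tied to the choice of $H(x)=(1.3-x^{0.78})/400$. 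You hint at ``a suitable reweighting of the influences,'' and that instinct is right, but the reweighting is not an optional refinement: without the diagonal conjugation (or an equivalent device) the star already defeats the argument, so the reweighting and the resulting matrix inequality are the core of the proof rather than a detail to be filled in.
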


Interestingly, combining the spectral independence results from the proof \Cref{thm:SI-mixing} 
with results of Chen, Feng, Yin, and Zhang~\cite[Theorem~1.9]{chen2024rapid}, we are able to strengthen \Cref{thm:relaxation} by allowing larger fugacities, i.e., 
 $\lambda \leq 1.3$.

\begin{corollary} \label{cor:relaxationBoost}
For any $n$-vertex tree, for any $\lambda\leq 1.3$ the Glauber dynamics for sampling $\lambda$-weighted independent sets in the hard-core model has an optimal relaxation time of $O(n)$.
\end{corollary}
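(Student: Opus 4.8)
The plan is to feed the spectral independence estimate established in the course of proving \Cref{thm:SI-mixing} into the degree-free ``spectral independence $\Rightarrow$ linear relaxation time'' result of Chen, Feng, Yin and Zhang~\cite[Theorem~1.9]{chen2024rapid}. The proof of \Cref{thm:SI-mixing} in \Cref{sec:SI} does more than is needed there: it shows that for every $\lambda\le 1.3$ there is a constant $\eta=\eta(\lambda)$, depending on $\lambda$ alone and \emph{not on $n$ or on the maximum degree}, such that the hard-core Gibbs distribution on \emph{any} tree is $\eta$-spectrally independent. The first step is to observe that this one bound already covers all pinnings: conditioning the hard-core model on a vertex being unoccupied deletes that vertex, while conditioning on a vertex being occupied deletes that vertex together with all its neighbours, and in either case what remains is the hard-core model with the same $\lambda$ on a vertex-induced subgraph of the original tree, i.e.\ on a forest. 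Since the pairwise influence matrix of a disjoint union of graphs is block diagonal, its largest eigenvalue is the maximum over the blocks, so $\eta$-spectral independence on all trees gives $\eta$-spectral independence on all forests, hence on the hard-core model on any tree conditioned on an arbitrary pinning.

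The second step is to apply \cite[Theorem~1.9]{chen2024rapid}, which turns $\eta$-spectral independence that holds uniformly over all pinnings (with $\eta$ a fixed constant) into a relaxation-time bound of $O(n)$ for the Glauber dynamics, with the implied constant depending on $\eta$ but \emph{not} on the maximum degree. This is precisely the feature that lets us improve on \Cref{thm:relaxation} and remove the bounded-degree hypothesis of \Cref{thm:SI-mixing}: unlike the optimal-mixing results of~\cite{CLV21}, it does not require marginal boundedness with a degree-independent constant, which in fact fails for the hard-core model on high-degree stars. Chaining the two steps --- $\eta(\lambda)$-spectral independence under all pinnings for $\lambda\le 1.3$, and then \cite[Theorem~1.9]{chen2024rapid} --- yields relaxation time $O(n)$ (with a constant depending on $\lambda$) for the Glauber dynamics on any $n$-vertex tree, which is of the optimal order; cf.\ the matching lower bound of~\cite{HS07}.

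All the real work is therefore in \Cref{sec:SI}, namely establishing the $\Delta$-free spectral independence constant $\eta(\lambda)$ for $\lambda\le 1.3$ on arbitrary trees; the remaining ingredients for this corollary are the routine reduction from pinnings to induced subforests above and a check that the (mild) hypotheses of \cite[Theorem~1.9]{chen2024rapid} are satisfied by the hard-core model on a tree --- e.g.\ a crude lower bound $\mu_{\min}\ge e^{-O(n)}$, which is immediate since $Z\le(1+\lambda)^n$ and each independent set carries weight at least $\min(1,\lambda)^n$. Consequently the only point requiring care is to make sure that the constant $\eta$ produced in \Cref{sec:SI} is genuinely free of any dependence on the maximum degree, so that \cite[Theorem~1.9]{chen2024rapid} can be invoked uniformly across all trees.
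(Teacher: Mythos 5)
Your proposal correctly identifies the two main ingredients the paper uses --- the $\Delta$-free spectral-independence bound from \Cref{sec:SI} (equivalently, \Cref{lep}), and \cite[Theorem~1.9]{chen2024rapid} --- but it mischaracterizes what \cite[Theorem~1.9]{chen2024rapid} actually delivers, and this leaves a genuine gap in the argument. You treat that theorem as a direct, degree-free implication ``complete $\eta$-spectral independence under all pinnings $\Rightarrow$ $O(n)$ relaxation time.'' It is not; it is a \emph{comparison} (field-dynamics / fugacity-boosting) theorem. As the paper uses it, it gives an inequality of the form $\gamma \geq \left(\theta/2\right)^{2\eta+7}\widehat{\gamma}$ relating the spectral gap $\gamma$ of Glauber dynamics at fugacity $\lambda$ to the spectral gap $\widehat{\gamma}$ at a \emph{smaller} fugacity $\widehat{\lambda}$, losing only a constant factor depending on $\eta$ and the ratio. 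That inequality is vacuous on its own: you still need a separate $O(n)$ relaxation-time bound at the smaller fugacity to anchor it.

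The paper supplies precisely that anchor via \Cref{thm:relaxation}: take $\widehat{\lambda}=1$, for which the variance-tensorization argument of \Cref{sec:variance} gives $\widehat{\gamma}=\Omega(n^{-1})$ on arbitrary trees with no degree dependence; then, for $\lambda\in[1.1,1.3]$, \Cref{lep} supplies complete $\eta$-spectral independence with $\eta\leq 10000$, and \cite[Theorem~1.9]{chen2024rapid} boosts the gap from $\widehat{\lambda}=1$ up to $\lambda$ at the cost of a fixed constant factor. (For $\lambda<1.1$, \Cref{thm:relaxation} already applies directly.) Your write-up omits \Cref{thm:relaxation} entirely and never produces a base case, so the chain does not close. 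Everything else in your proposal --- the observation that pinnings of a tree give forests, that the influence matrix of a disjoint union is block diagonal, that the degree-free constant in \Cref{sec:SI} is exactly what makes a degree-free conclusion possible --- is correct and matches the paper's reasoning; the missing piece is recognizing that \cite[Theorem~1.9]{chen2024rapid} is a relative bound that must be seeded with the tensorization result of \Cref{thm:relaxation}.
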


In the next section we recall the key functional definitions and basic properties of variance that will be useful later in the proofs.  In \cref{sec:variance} we prove approximate tensorization of variance which establishes~\cref{thm:relaxation}.  
We establish spectral independence and prove~\cref{thm:SI-mixing} in  \cref{sec:SI}.  

\begin{remark}
\label{rem:mistake}
An earlier version of this paper \cite{arxiv-version} claimed to prove $O(n\log{n})$ mixing time for $\lambda\leq .44$ for any tree (without any constraint on the maximum degree).  There was a mistake in that proof.  In particular, inequality (54) is false.  Moreover, Zongchen Chen pointed out a simple test function $f$ which shows that entropy tensorization and modified log-Sobolev inequality (MLSI) do not hold for the star graph with constants independent of the degree.  
\end{remark}

A recent paper of Chen, Yang, Yin, and Zhang \cite{CYYZ} improves \Cref{cor:relaxationBoost} to obtain optimal relaxation time on trees for $\lambda<e^2$.

\section{Preliminaries}

\subsection{Standard Definitions}
Let  $P$ be the transition matrix of a Markov chain $\{X_t\}$  with a finite state space 
$\Omega$ and equilibrium distribution~$\mu$. For $t\geq 0$ and $\sigma\in \Omega$, let 
$P^{t}(\sigma, \cdot)$ denote the distribution of $X_t$ when the initial state of the chain 
satisfies $X_0=\sigma$.  The  {\em mixing time} of the Markov chain  $\{X_t\}_{t \geq 0}$ is 
defined by
\begin{align}\label{def:MixingTime}
   T_{\rm mix} &=  \max_{\sigma\in \Omega}\min { \left\{t > 0 \mid \Vert P^{t}(\sigma, \cdot) - \mu \Vert_{\rm TV} \leq \frac{1}{2 \mathrm{e}} \right \}} .
\end{align}
The transition matrix $P$ with stationary distribution $\mu$ is called {\em time reversible} if it satisfies the so-called {\em detailed balance relation}, 
i.e., for any $x, y\in\Omega$ we have $\mu(x)P(x,y)=P(y,x)\mu(y)$. 
For $P$ that is time reversible the set of eigenvalues are real numbers and we denote them as 
$1=\lambda_1\geq \lambda_2\geq \ldots \lambda_{|\Omega|}\geq -1$. 
Let $\lambda^*=\max\{|\lambda_2|, |\lambda_{|\Omega|}|\}$, then  
we define the {\em relaxation time} $\Trelax$ by
\begin{align}\label{def:RelaxTime}
\Trelax(P)=\frac{1}{1-\lambda^*}. 
\end{align}
The quantity $1-\lambda^*$ is also known as the {\em spectral gap} of $P$.
We use $\Trelax$ to bound $\Tmix$ by using the following inequality
\begin{align}\label{eq:MixingTimeVsTrelax}
\Tmix(P) \leq \Trelax(P)\cdot \log\left(\frac{2e}{\min_{x\in\Omega}\mu(x)}\right).
\end{align}

\subsection{Gibbs Distributions and Functional Analytic Definitions} 

For a graph $G=(V,E)$ and $\lambda>0$,  let $\mu=\mu_{G,\lambda}$ be the hard-core model on this graph with activity~$\lambda$, 
while let  $\Omega\subseteq 2^V$ be the support of $\mu$, i.e., $\Omega$ are the collection of independent sets of $G$.

For any function $f:\Omega\to\mathbb{R}_{\geq 0}$,  we let $\mu(f)$ is the expected value of $f$ with respect to $\mu$, i.e., 
\begin{align*}
\mu(f)  &= \sum_{\sigma \in \Omega}\mu(\sigma)f(\sigma).
\end{align*}
Analogously, we define variance of $f$ with respect to $\mu$ by 
\[
\mathrm{Var}(f) = \textstyle \mu(f^2)-\left(\mu(f)\right)^2 .
\]
We are also using the following equation for $\mathrm{Var}(f)$, 
\[
\mathrm{Var}(f) = \textstyle \frac12 \sum_{\sigma, \tau\in \Omega} \mu(\sigma)\mu(\tau)\left(f(\sigma)-f(\tau)\right)^2 .
\]

 For any subset $S \subseteq V$, let $\Omega_S$ denote the set of independent sets on $S$.  Then, let $\mu_S$ denote the   marginal of  $\mu$ on $S$; that is, for any $\sigma\in\Omega_S$, we have that
 \[
 \mu_S(\sigma) =\sum_{\eta\in \Omega}\mathbf{1}\{ \eta \cap  S = \sigma \} \mu(\eta). 
 \]
 For any $S\subset V$, any $\tau \in \Omega_{V\setminus S}$, we let $\mu^{\tau}_S$ be the distribution $\mu$ 
conditional on the configuration $\tau$ on $V\setminus S$, and let $\Omega_S^\tau$ to be the support of $\mu^\tau_S$.

For any  $S \subseteq V$, for any  $\tau \in \Omega_{V \setminus S}$,  we define the function 
$f_{\tau}: \Omega^{\tau}_S \to \mathbb{R}_{\geq 0}$  such that    $f_{\tau}(\sigma) = f(\tau \cup \sigma)$ for all  
$\sigma \in \Omega^\tau_S$.  

Let 
\[
\mu_S^\tau(f) = \sum_{\sigma\in \Omega^{\tau}_S}\mu^{\tau}_S(\sigma)f_{\tau}(\sigma).
\]
Let $\mathrm{Var}_S^\tau(f)$ denote the variance of $f_\tau$ with respect 
to the conditional  distribution $\mu^\tau_S$: 
\begin{align}
\mathrm{Var}_S^\tau(f) & = \textstyle \mu^{\tau}_S(f^2)-\left(\mu^{\tau}_S(f)\right)^2 
&=\frac{1}{2}\sum_{\sigma, \eta \in \Omega} 
\frac{\mathbf{1}\{\sigma\setminus S=\tau, \ \eta\setminus S=\tau\} \mu(\sigma)\mu(\eta)}{\left (\sum_{\hat{\sigma}\in \Omega}
\mathbf{1}\{\hat{\sigma}\setminus S=\tau\} \mu(\hat{\sigma})\right)^2}\left(f(\sigma)-f(\eta)\right)^2 . \label{eq:CondVarCompl}
\end{align}

Furthermore, we let 
\begin{equation}  \label{def:local-var}
\mu(\mathrm{Var}_S(f)) = \sum_{\tau \in \Omega_{V \setminus S}} \mu_{V \setminus S}(\tau)\cdot \mathrm{Var}_S^\tau(f) ,
\end{equation}
i.e.,  $\mu(\mathrm{Var}_S(f))$ is the average of   $\mathrm{Var}_S^\tau(f)$ with respect to $\tau$ being distributed as in    $\mu_{V \setminus S}(\cdot)$.
For the sake of brevity, when  $S=\{v\}$, i.e., the set $S$ is a singleton, we  use   $\mu(\mathrm{Var}_v(f))$.

Finally, let 
\begin{equation} \label{eqn:var-mu-f}
\mathrm{Var}(\mu_S(f)) 
 = 
\mu\left( \left(\mu_S(f) \right)^2\right)-
\left(\mu\left( \mu_S(f) \right)\right)^2  
 = 
\Exp_{\tau\sim\mu_{V\setminus S}}\left[ \left(\mu^\tau_S(f) \right)^2\right]-
\left(\Exp_{\tau\sim\mu_{V\setminus S}}\left[ \mu^\tau_S(f) \right]\right)^2,  
\end{equation}
i.e.,  $\mathrm{Var}(\mu_S(f))$ is the variance of   $\mu_S^\tau(f)$ with respect to $\tau$ being distributed as in    $\mu_{V \setminus S}(\cdot)$.

When $X$ is the following two-valued random variable:
\[
X = \begin{cases}
A & \mbox{ with probability $p$} \\
B & \mbox{ with probability $1-p$} ,
\end{cases}
\]
then one formulation for the variance that will be convenient for us is
\begin{equation} \label{eq:bernoulli-variance}
\Var(X) = p(1-p)(A-B)^2 .
\end{equation}

\subsection{Approximate Tensorization of Variance}

To bound the convergence rate of the Glauber dynamics we
consider the approximate tensorization of variance as introduced in~\cite{CMT14}.

We begin with the definition of approximate tensorization of variance.

\begin{definition}[Variance Tensorization]\label{def:var-tensorization}
A distribution $\mu$ with support $\Omega \subseteq \{\pm 1\}^V$ satisfies the approximate tensorisation of 
Variance with constant $C>0$, denoted using the predicate $VT(C)$, if for 
all $f:\Omega\to\mathbb{R}_{\geq 0}$ we have that
\begin{align}\nonumber
\mathrm{Var}(f)\leq C\cdot \sum_{v\in V}\mu\left( \mathrm{Var}_v(f)\right) .
\end{align}
\end{definition}
For a vertex $x$, recall that  $\mathrm{Var}_{x}[f]=  \sum_{\sigma}\mu_{V\setminus \{x\}}(\sigma)\mathrm{Var}^{\sigma}_{x}[f_{\sigma}]$.
%
Variance tensorization $VT(C)$  yields the following bound on the
relaxation time of the Glauber dynamics~\cite{CMT14,Caputo-notes}:  
\begin{equation}\label{eq:relax-VT}
    \Trelax \leq  Cn .
\end{equation}

\subsection{Decomposition of Variance}

We use the following basic decomposition properties for  variance.
The following lemma follows from a decomposition of relative entropy, see~\cite[Lemma 3.1]{CP20} (see also~\cite[Lemma 2.3]{BCCPSV22}).
\begin{lemma}\label{el1}
For any $S\subset V$, for any $f\geq 0$:
\begin{align}
    \label{el1:var}
    \mathrm{Var}(f) & = \mu[\mathrm{Var}_S(f)] + \mathrm{Var}(\mu_S(f)),   
\end{align}
where $\mathrm{Var}(\mu_S(f))$ is defined in \cref{eqn:var-mu-f}.
\end{lemma}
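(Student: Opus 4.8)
The plan is to prove the identity \eqref{el1:var} directly from the definitions, using the law of total expectation applied to the conditioning on the configuration outside $S$. Write $\tau$ for a configuration on $V\setminus S$ drawn from the marginal $\mu_{V\setminus S}$, and recall that $\mu_S^\tau(f) = \Exp_{\sigma\sim\mu_S^\tau}[f_\tau(\sigma)]$ is precisely the conditional expectation $\Exp[f \mid \text{configuration on } V\setminus S = \tau]$, while $\mathrm{Var}_S^\tau(f)$ is the corresponding conditional variance. With this dictionary in place, \eqref{el1:var} is just the familiar law of total variance, $\mathrm{Var}(f) = \Exp[\mathrm{Var}(f \mid \tau)] + \mathrm{Var}(\Exp[f\mid\tau])$, specialized to the decomposition of the state space induced by $S$ and $V\setminus S$.

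The key steps, in order, are as follows. First I would expand $\mathrm{Var}(f) = \mu(f^2) - (\mu(f))^2$ and split $\mu(f^2)$ by conditioning: $\mu(f^2) = \Exp_{\tau\sim\mu_{V\setminus S}}[\mu_S^\tau(f^2)]$, which is valid because summing first over $\sigma\in\Omega_S^\tau$ and then over $\tau$ recovers the sum over all of $\Omega$. Second, for each fixed $\tau$ I would use $\mu_S^\tau(f^2) = \mathrm{Var}_S^\tau(f) + (\mu_S^\tau(f))^2$, so that $\mu(f^2) = \Exp_\tau[\mathrm{Var}_S^\tau(f)] + \Exp_\tau[(\mu_S^\tau(f))^2]$. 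The first term is exactly $\mu[\mathrm{Var}_S(f)]$ by \eqref{def:local-var}. Third, I would observe that $\mu(f) = \Exp_\tau[\mu_S^\tau(f)]$ by the same conditioning argument, hence $(\mu(f))^2 = (\Exp_\tau[\mu_S^\tau(f)])^2$. Subtracting, $\mathrm{Var}(f) = \mu[\mathrm{Var}_S(f)] + \big(\Exp_\tau[(\mu_S^\tau(f))^2] - (\Exp_\tau[\mu_S^\tau(f)])^2\big)$, and the parenthesized quantity is $\mathrm{Var}(\mu_S(f))$ by its definition in \eqref{eqn:var-mu-f}. This completes the proof.

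There is essentially no hard step here; the only thing to be careful about is the bookkeeping of the two layers of conditioning, namely verifying that $\sum_{\tau\in\Omega_{V\setminus S}}\mu_{V\setminus S}(\tau)\sum_{\sigma\in\Omega_S^\tau}\mu_S^\tau(\sigma)\,(\cdot) = \sum_{\eta\in\Omega}\mu(\eta)\,(\cdot)$, which follows from the chain rule $\mu(\eta) = \mu_{V\setminus S}(\eta\setminus S)\cdot\mu_S^{\eta\setminus S}(\eta\cap S)$ and the fact that every $\eta\in\Omega$ decomposes uniquely as such a pair. One could alternatively cite the known decomposition of relative entropy from \cite[Lemma 3.1]{CP20} and specialize, but since variance admits this clean elementary derivation it seems cleanest to just give it directly. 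The identity holds for arbitrary $f$ (the nonnegativity hypothesis $f\ge 0$ is not needed for this particular statement, though it is retained for consistency with how the lemma is used later).
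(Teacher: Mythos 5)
Your proof is correct: it is exactly the law of total variance, carried out directly from the paper's definitions by conditioning on the configuration on $V\setminus S$. The paper itself gives no argument here, merely citing \cite[Lemma 3.1]{CP20} (which is actually the analogous decomposition for relative entropy) and \cite[Lemma 2.3]{BCCPSV22}; your self-contained derivation is the natural way to make the variance statement explicit, and your side remark that the hypothesis $f\ge 0$ is not actually needed for this identity is accurate.
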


We utilize the following variance factorisation  for product measures, see~\cite[Eqn (4.7)]{Caputo-notes}.

\begin{lemma}     \label{lem:el2}    
Consider $U,W\subset V$ where  $\dist(U,W) \ge 2$.  Then for all $f\geq 0$ we have: 
\begin{align}
    \label{eqn:el2-var}
    \mu[\mathrm{Var}_U(\mu_W(f))] &\leq \mu[\mathrm{Var}_U(f)],
\end{align}
\end{lemma}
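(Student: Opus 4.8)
Since $\dist(U,W)\ge 2$, the sets $U$ and $W$ are disjoint and no edge of $G$ joins a vertex of $U$ to a vertex of $W$. Put $R:=V\setminus(U\cup W)$ and write a configuration as $\sigma=\sigma_U\cup\sigma_W\cup\sigma_R$ according to these three blocks. The plan rests on the spatial Markov property of the hard-core model: for any $\sigma_R\in\Omega_R$ in the support of $\mu_R$, the weight $\lambda^{|\sigma_U|+|\sigma_W|}$ restricted to independent sets extending $\sigma_R$ factorizes into a factor depending only on $\sigma_U$ (encoding the $U$--$R$ constraints and the independence constraint inside $U$) times a factor depending only on $\sigma_W$, precisely because there are no $U$--$W$ edges. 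Hence, conditioned on $\sigma_R$, the restrictions to $U$ and to $W$ are independent, and the conditional laws ``$\mu_U$ given $\sigma_R$'' and ``$\mu_W$ given $\sigma_R$'', which I write $\mu_U(\cdot\mid\sigma_R)$ and $\mu_W(\cdot\mid\sigma_R)$, are unambiguous: they do not depend on how $\sigma_R$ is extended over the complementary block.

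The first step is to rewrite both sides of \eqref{eqn:el2-var}. For the right-hand side, decompose the average over $\tau\sim\mu_{V\setminus U}$ in \eqref{def:local-var} by first drawing $\sigma_R\sim\mu_R$ and then $\sigma_W$ from its conditional law given $\sigma_R$, which by the conditional independence is exactly $\mu_W(\cdot\mid\sigma_R)$; using moreover $\mu_U^{\sigma_W\cup\sigma_R}=\mu_U(\cdot\mid\sigma_R)$ one obtains
\[
\mu(\mathrm{Var}_U(f))=\Exp_{\sigma_R\sim\mu_R}\ \Exp_{\sigma_W\sim\mu_W(\cdot\mid\sigma_R)}\ \mathrm{Var}_{\sigma_U\sim\mu_U(\cdot\mid\sigma_R)}\!\left[f(\sigma_U\cup\sigma_W\cup\sigma_R)\right].
\]
For the left-hand side, note that $g:=\mu_W(f)$ depends only on the configuration on $V\setminus W$, and by the same factorization $g(\sigma_U\cup\sigma_W\cup\sigma_R)=\sum_{\sigma_W'}\mu_W(\sigma_W'\mid\sigma_R)\,f(\sigma_U\cup\sigma_W'\cup\sigma_R)$, which in fact does not depend on the $W$-block at all. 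Consequently the inner $\sigma_W$-average implicit in $\mathrm{Var}_U(g)$ is vacuous and
\[
\mu(\mathrm{Var}_U(\mu_W(f)))=\Exp_{\sigma_R\sim\mu_R}\ \mathrm{Var}_{\sigma_U\sim\mu_U(\cdot\mid\sigma_R)}\!\left[\,\sum_{\sigma_W}\mu_W(\sigma_W\mid\sigma_R)\,f(\sigma_U\cup\sigma_W\cup\sigma_R)\right].
\]

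Comparing the two displays, it now suffices to prove the inequality for each fixed $\sigma_R$ in the support of $\mu_R$. Writing $\nu:=\mu_U(\cdot\mid\sigma_R)$, $\rho:=\mu_W(\cdot\mid\sigma_R)$ and $h_{\sigma_W}(\sigma_U):=f(\sigma_U\cup\sigma_W\cup\sigma_R)$, the claim becomes
\[
\mathrm{Var}_{\nu}\!\left(\Exp_{\sigma_W\sim\rho}h_{\sigma_W}\right)\ \le\ \Exp_{\sigma_W\sim\rho}\!\left[\mathrm{Var}_{\nu}(h_{\sigma_W})\right],
\]
which is Jensen's inequality applied to the functional $h\mapsto\mathrm{Var}_\nu(h)=\tfrac12\sum_{x,y}\nu(x)\nu(y)\big(h(x)-h(y)\big)^2$; this functional is convex, being a nonnegative linear combination of squares of linear functionals of $h$. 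Taking the expectation over $\sigma_R\sim\mu_R$ finishes the proof.

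The one delicate point is the conditional-independence step, together with the bookkeeping that makes the conditional marginals ``given $\sigma_R$'' well defined even though $R$ is not the complement of a single block; once that is set up the rest is the one-line convexity argument above, and in particular nonnegativity of $f$ is not needed.
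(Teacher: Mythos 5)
Your proof is correct, and it replaces the paper's bare citation with a self-contained argument. The paper's ``proof'' simply invokes equation~(4.7) of Caputo's lecture notes, which holds whenever $U\cap W=\emptyset$ and the conditional-expectation operators $\mu_U,\mu_W$ commute, and then observes (as you also do) that $\dist(U,W)\ge 2$ gives conditional independence of $S\cap U$ and $S\cap W$ given $S\setminus(U\cup W)$, hence commutativity. You make the same spatial-Markov observation but carry it all the way through: you write both sides as an outer expectation over $\sigma_R\sim\mu_R$, observe that $\mu_W(f)$ is $\sigma_W$-independent so the inner $W$-average on the left collapses, and then reduce to the pointwise inequality $\mathrm{Var}_\nu(\Exp_\rho h)\le\Exp_\rho[\mathrm{Var}_\nu(h)]$, which is Jensen's inequality for the convex functional $h\mapsto\mathrm{Var}_\nu(h)=\tfrac12\sum_{x,y}\nu(x)\nu(y)(h(x)-h(y))^2$. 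This buys you an elementary, reference-free derivation (and correctly notes that $f\ge 0$ is never used), at the small cost of some bookkeeping to make the conditional laws ``given $\sigma_R$'' well defined --- which you handle by pointing out that the $U$- and $W$-marginals given $\sigma_R$ do not depend on the complementary block. Both arguments rest on the same essential fact; yours just proves the abstract lemma rather than importing it.
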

On a first account,  the reader might find it challenging to parse the expression $\mu[\mathrm{Var}_U(\mu_W(f)) ]$. 
In that respect, \eqref{def:local-var} and \eqref{eqn:var-mu-f} might be useful. 
Specifically,  $\mu[\mathrm{Var}_U(\mu_W(f)) ]$  is the expectation of
$\mathrm{Var}^{\tau}_U(\mu_W(f)) $ with respect to $\tau$ being distributed as in $\mu_{\bar{U}\cap \bar{W}}$.
Furthermore,  $\mathrm{Var}^{\tau}_U(\mu_W(f)) $ corresponds to   the  variance of   $\mu_W^{\tau,\sigma}(f)$ with respect 
to the configurations $\tau$ at  $V\setminus (U\cup W)$ and $\sigma$ at $U$, while $\tau$ is fixed and 
$\sigma$ is distributed as in    $\mu^{\tau}_{U}(\cdot)$. 
\begin{proof}
We apply~\cite[Eqn (4.7)]{Caputo-notes}, which reaches the same conclusion under the assumptions that
$U \cap W = \emptyset$ and $\mu_U \mu_W = \mu_W \mu_U$.  
In the current context, the reason these conditional expectation operators commute here is that,
because $\dist(U,W) \ge 2$, if we let $S$ be an independent set sampled according to distribution $\mu$,
then the random variables
   $S \cap U$ and $S \cap W$ are conditionally independent given $S \setminus (U \cup W)$.
\end{proof}

\section{Variance Factorization}
\label{sec:variance}

In this section we prove \cref{thm:relaxation}, establishing an optimal bound on the relaxation time for the Glauber dynamics on any tree for $\lambda\leq 1.1$.   We will prove this by establishing variance tensorization, see~\cref{def:var-tensorization}.

Consider a graph $G=(V,E)$ and a collection of fugacities $\lambda_i$ for each $i\in V$.  Throughout this section we will assume that all the fugacities are bounded by $1.1$. Consider the following more general definition of the Gibbs distribution $\mu$ for the hard-core model, where for an independent set $S$, 
 \begin{equation}\label{def:general-gibbs}
 \mu(S)\propto \prod_{i\in S}\lambda_i .
 \end{equation}

Let $T'=(V',E')$ be a tree, let $\{\lambda'_w\}_{w\in V'}$ be a collection of fugacities and let $\mu'$ be the corresponding Gibbs distribution. 
We will establish the following variance tensorization inequality: for all $f':2^{V'}\rightarrow{\mathbb R}$
\begin{equation}\label{vti}
\mathrm{Var}(f') \leq \sum_{x\in V'} F(\lambda'_x) \mu'(\mathrm{Var}_x(f')) ,
\end{equation}
 where $F:\R_{\geq 0}\rightarrow\R_{\geq 0}$ is a function to be determined later (in~\cref{lem:exp-induct}).   We refer to $\mathrm{Var}(f')$ as the ``global'' variance and we refer to $\mu'(\mathrm{Var}_x(f'))$ as the ``local'' variance (of $f'$ at $x$).

We will establish~\eqref{vti} using induction. Let $v$ be a vertex of degree $1$ in $T'$ and let $u$ be the unique neighbor of~$v$.
Let $T=(V,E)$ be the tree by removing $v$ from $T'$, i.e., $T$ is the induced subgraph of $T'$ on $V=V'\setminus\{v\}$.
Let $\{\lambda_w\}_{w\in V}$ be a collection of fugacities where $\lambda_w = \lambda'_w$ for $w\neq u$ and $\lambda_u=\lambda'_u/(1+\lambda'_v)$. 
Let $\mu$ be the hard-core measure on $T$ with fugacities $\{\lambda_w\}_{w\in V}$.

Note that for $S\subseteq V$ we have
\begin{equation}\label{e1}
\mu(S) = \mu'(S) + \mu'(S\cup\{v\}) = \mu'_V(S) . 
\end{equation}
Fix a function $f':2^{V'}\rightarrow\R$. Let $f:2^V\rightarrow\R$  be defined by
\begin{equation}\label{defn:f'}
f(S) = \frac{\mu'(S) f'(S) + \mu'(S\cup\{v\}) f'(S\cup\{v\})}{\mu'(S)  + \mu'(S\cup\{v\})} = \Exp_{Z\sim\mu'}[f'(Z)\mid Z\cap V = S] = \mu'_{v}(f')(S) .
\end{equation}
Note that $f'$ is defined on independent sets of the tree $T'$ and $f$ is the natural projection of $f'$ to the tree~$T$.  
Since $f=\mu'_v(f')$, then by \cref{el1} we have that:
\begin{equation}
\label{var:el1-step}
    \mathrm{Var}(f') = \mu'(\mathrm{Var}_v(f')) + \mathrm{Var}(f) .
\end{equation}

For measure $\mu'$,  when we condition on the configuration at $u$, the configuration at $V\setminus\{u\}$ is independent of that at $\{v\}$.
Hence, from~\cref{eqn:el2-var} for any $x\not\in \{u,v\}$ (by setting $U=\{x\}$ and $W=\{v\}$) we have:

\begin{equation*}
\mu'(\mathrm{Var}_x(f))\leq\mu'(\mathrm{Var}_x(f')) .
\end{equation*}
Since by \eqref{e1} we have $\mu(\mathrm{Var}_x(f))=\mu'(\mathrm{Var}_x(f))$,  the above implies  that
\begin{equation}
    \label{var:el2-step}
\mu(\mathrm{Var}_x(f))\leq\mu'(\mathrm{Var}_x(f')) .
\end{equation}

The following lemma is the main technical ingredient.  It bounds the local variance at $u$ for the smaller graph $T$ in terms of the local variance at $u$ and $v$ in the original graph $T'$.
\begin{lemma}
    \label{lem:exp-induct}
    For $F(x)=1000 (1+x)^2 \exp(1.3x)$ and any $\lambda_v,\lambda_u\in (0,1.1]$ we have:
\begin{equation}\label{miss}
F(\lambda_u) \mu(\mathrm{Var}_u(f)) \leq (F(\lambda'_v) - 1)\mu'(\mathrm{Var}_v(f')) + F(\lambda'_u) \mu'(\mathrm{Var}_u(f')) .
\end{equation}
\end{lemma}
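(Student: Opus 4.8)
The plan is to reduce everything to a statement about a single vertex $v$ of degree $1$ with neighbor $u$, and to analyze the conditional structure explicitly. Fix the configuration $\tau$ on $V \setminus \{u\}$ (equivalently on $V' \setminus \{u,v\}$, since $v$ is pendant). Conditioned on $\tau$, there are only two or four relevant states depending on whether $u$ is allowed to be occupied. I would write $a = f'(\tau)$ (with $u,v$ both empty), $b = f'(\tau \cup \{v\})$, $c = f'(\tau \cup \{u\})$; note $\tau \cup \{u,v\}$ is not independent. Then $\mathrm{Var}_u(f)$ (conditioned on $\tau$) is a two-valued Bernoulli variance using \eqref{eq:bernoulli-variance}, with the value at ``$u$ empty'' being $f(\tau) = \mu'_v(f')(\tau) = (a + \lambda'_v b)/(1+\lambda'_v)$ and the value at ``$u$ occupied'' being $f(\tau \cup \{u\}) = c$, and occupation probability $p_u = \lambda_u/(1+\lambda_u)$. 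Meanwhile $\mu'(\mathrm{Var}_v(f'))$ restricted to this $\tau$-block only sees the states where $u$ is empty, weighted by the probability $q$ that $u$ is empty under $\mu'$ given $\tau$; there $\mathrm{Var}_v$ is Bernoulli with values $a, b$ and probability $\lambda'_v/(1+\lambda'_v)$. Similarly $\mu'(\mathrm{Var}_u(f'))$ restricted to the $\tau$-block is Bernoulli between the ``$u$ empty'' branch — which itself is an average of $a$ and $b$ — and $c$.

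**The pointwise inequality.** The key step is to prove \eqref{miss} \emph{block by block}, i.e.\ for each fixed $\tau$ on $V' \setminus \{u,v\}$. Because $v$ is pendant, the weight of the $\tau$-block is the same whether computed in $\mu$ or in $\mu'$ (summing over the $v$-coordinate), so it suffices to show, writing $\alpha = \lambda'_v$, $\beta = \lambda_u = \lambda'_u/(1+\alpha)$, $\gamma = \lambda'_u$:
\[
F(\beta)\,\frac{\beta}{(1+\beta)^2}\left(c - \tfrac{a+\alpha b}{1+\alpha}\right)^2 \;\le\; (F(\alpha)-1)\,\frac{\alpha}{(1+\alpha)^2}(a-b)^2 \;+\; F(\gamma)\,\text{(Bernoulli variance of $u$-occupation under $\mu'$ with value $c$ vs.\ }\tfrac{a+\alpha b}{1+\alpha}\text{)} .
\]
The Bernoulli variance for the last term has occupation probability $p' = \gamma/(1+\alpha+\gamma)$ (the correct $\mu'$-marginal of $u$ given $\tau$, since the ``$u$ empty'' states contribute weight $1+\alpha$ and the ``$u$ occupied'' state contributes weight $\gamma$), giving coefficient $\frac{(1+\alpha)\gamma}{(1+\alpha+\gamma)^2}$ times $\left(c - \tfrac{a+\alpha b}{1+\alpha}\right)^2$. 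So after cancelling the common factor $\left(c - \tfrac{a+\alpha b}{1+\alpha}\right)^2$ from two of the three terms, the inequality becomes: it suffices that
\[
F(\beta)\,\frac{\beta}{(1+\beta)^2} \;\le\; F(\gamma)\,\frac{(1+\alpha)\gamma}{(1+\alpha+\gamma)^2} \;+\; (F(\alpha)-1)\,\frac{\alpha}{(1+\alpha)^2}\cdot\frac{(a-b)^2}{\left(c-\frac{a+\alpha b}{1+\alpha}\right)^2} .
\]
The last ratio can be arbitrarily small (take $a=b$), so we cannot rely on it; we need
\[
F(\beta)\,\frac{\beta}{(1+\beta)^2} \;\le\; F(\gamma)\,\frac{(1+\alpha)\gamma}{(1+\alpha+\gamma)^2},
\qquad \beta = \frac{\gamma}{1+\alpha} .
\]
Substituting $\beta = \gamma/(1+\alpha)$ and simplifying, $\frac{\beta}{(1+\beta)^2} = \frac{(1+\alpha)\gamma}{(1+\alpha+\gamma)^2}$ exactly — the two geometric factors coincide! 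So the required inequality collapses to the clean monotonicity-type statement $F(\beta) \le F(\gamma)$ whenever $\beta = \gamma/(1+\alpha) \le \gamma$, i.e.\ simply that $F$ is nondecreasing on $(0,1.1]$. Since $F(x) = 1000(1+x)^2 e^{1.3x}$ is visibly increasing, this is immediate. (One does still need $\beta \le 1.1$ to invoke the induction hypothesis cleanly, which holds since $\beta \le \gamma \le 1.1$; and the ``$-1$'' slack in the $F(\alpha)-1$ term is simply not needed here — it is what makes the \emph{outer} induction in \eqref{vti} close, via \eqref{var:el1-step}.)

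**Where the real work is.** I expect the genuine obstacle is not the algebra above — which, as sketched, telescopes surprisingly nicely — but rather being careful that when $u$ is \emph{forced} empty by $\tau$ (i.e.\ some neighbor of $u$ in $T$ is occupied), the ``$u$ occupied'' states vanish and several terms degenerate; one must check the inequality still holds (it does trivially, since then $\mathrm{Var}_u(f) = 0$ in that block). A second point requiring care is the bookkeeping that the $\tau$-block weights agree between $\mu$ and $\mu'$ and that $\mu_u^\tau$ versus $\mu'^\tau_u$ have the marginals I claimed — this is exactly where $\lambda_u = \lambda'_u/(1+\lambda'_v)$ is used and must be verified from \eqref{def:general-gibbs} and \eqref{e1}. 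Finally I would double-check the edge constraint: the plan reduces \eqref{miss} to $F$ nondecreasing, so the specific form of $F$ and the constant $1.1$ only enter through (a) $F$ increasing (trivial) and (b) ensuring the induction hypothesis for \eqref{vti} applies to the smaller tree $T$ with fugacity $\lambda_u \le 1.1$; the choice $F(x) = 1000(1+x)^2 e^{1.3x}$ with its large constant is presumably dictated by the \emph{base case} and by the companion spectral-independence argument, not by this lemma, so I would note that and move on.
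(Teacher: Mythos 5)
There is a genuine gap, and it lies in your computation of $\mu'(\mathrm{Var}_u(f'))$. You describe it as ``Bernoulli between the `$u$ empty' branch --- which itself is an average of $a$ and $b$ --- and $c$,'' giving a coefficient times $\bigl(c - \tfrac{a+\alpha b}{1+\alpha}\bigr)^2$. But that quantity is $\mathrm{Var}_u(f) = \mathrm{Var}_u(\mu'_v f')$, not $\mathrm{Var}_u(f')$. The local variance $\mathrm{Var}_u^{\sigma}(f')$ conditions on the configuration $\sigma$ on all of $V'\setminus\{u\}$, \emph{including} $v$: when $v$ is empty it is a Bernoulli variance between $a$ and $c$, and when $v$ is occupied it is zero (since $u$ is forced empty). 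Summing over $\sigma$, the block contribution to $\mu'(\mathrm{Var}_u(f'))$ is proportional to $(a-c)^2$, not to $\bigl(c - \tfrac{a+\alpha b}{1+\alpha}\bigr)^2$. (A one-line sanity check: take $a=c$ and $a\neq b$; then $\mathrm{Var}_u^{\sigma}(f')=0$ for every $\sigma$, yet your expression is strictly positive.) Consequently, the ``clean cancellation'' you rely on does not occur.

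Once the three local variances are computed correctly, the per-block inequality one must prove has three \emph{distinct} squared differences: $\bigl(c - \tfrac{a+\alpha b}{1+\alpha}\bigr)^2$ on the left, and $(a-b)^2$, $(a-c)^2$ on the right (these are precisely $C-pA-(1-p)B$, $B-A$, $C-A$ in the paper's notation). Relating them is the actual content of the lemma: the paper proves the elementary quadratic inequality $(C - pA - (1-p)B)^2 \le (1+s_1)(C-A)^2 + (1-p)^2(1+s_2)(B-A)^2$ valid whenever $s_1 s_2 \ge 1$ (\cref{uin}), and then the technical work --- \cref{claim:s1s2}, \cref{lem:eeee1} --- is showing $s_1 s_2 \ge 1$ for the specific $F(x) = 1000(1+x)^2 e^{1.3x}$ on $(0,1.1]$. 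This is exactly where the particular form of $F$ and the constant $1.1$ enter, contrary to your remark that they are ``dictated by the base case'' and that the argument reduces to $F$ being nondecreasing. Relatedly, the $(a-b)^2$ term with coefficient $F(\lambda'_v)-1$ is not unused slack; it is one of the two budget terms in the AM--GM step and the proof would fail without it. Your setup (per-block reduction, identification of $a,b,c$, the weight bookkeeping between $\mu$ and $\mu'$, the handling of blocked $u$) is correct and matches the paper's --- the error is confined to the evaluation of $\mu'(\mathrm{Var}_u(f'))$ and the resulting collapse of the hard inequality into a trivial one.
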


We now utilize the above lemma to prove the main theorem for the relaxation time.  We then go back to prove \cref{lem:exp-induct}.
\begin{proof}[Proof of \cref{thm:relaxation}]
Note \cref{miss} is equivalent to:
\begin{equation}\label{zzzzz}
\mu'(\mathrm{Var}_v(f'))  +
    F(\lambda_u) \mu(\mathrm{Var}_u(f)) \leq F(\lambda'_v)\mu'(\mathrm{Var}_v(f'))
    +  F(\lambda'_u) \mu'(\mathrm{Var}_u(f'))  .
\end{equation}
We can prove variance tensorization by induction as follows:
\begin{align*}
\nonumber
\mathrm{Var}(f') 
&= \mu'(\mathrm{Var}_v(f')) + \mathrm{Var}(f) 
\\
\nonumber
&\leq  \mu'(\mathrm{Var}_v(f')) + \sum_{x\in V} F(\lambda_x) \mu(\mathrm{Var}_x(f))  
\\ 
\nonumber
& \leq 
 \mu'(\mathrm{Var}_v(f')) + F(\lambda_u) \mu(\mathrm{Var}_u(f)) +  \sum_{x\in V\setminus\{u\}} F(\lambda'_x) \mu'(\mathrm{Var}_x(f')) 
 \\
& \le 
F(\lambda'_v)\mu'(\mathrm{Var}_v(f')) +  F(\lambda'_u) \mu'(\mathrm{Var}_u(f')) +  \sum_{x\in V\setminus\{u\}} F(\lambda'_x) \mu'(\mathrm{Var}_x(f')) 
\\
 & =
 \sum_{x\in V'} F(\lambda'_x) \mu'(\mathrm{Var}_x(f')).
\end{align*}
For the first line, we use \cref{var:el1-step}. The second line follows from the inductive hypothesis. For the third line, we use \cref{var:el2-step} and the 
fact that $F(\lambda_x)\leq F(\lambda'_x) $, since $F$ is increasing and $\lambda_x\leq \lambda'_x$. The fourth line follows by using  \cref{zzzzz}.
\end{proof}

Our task now is to prove \cref{lem:exp-induct}.   The following technical inequality will be useful.
\begin{lemma}\label{uin}
Let $p\in [0,1]$. Suppose $s_1,s_2>0$ satisfy $s_1 s_2 \geq 1$. Then for all $A,B,C\in\R$ we have
\begin{equation}\label{usf}
(C-p A - (1-p)B)^2 \leq (1+s_1)(C-A)^2 + (1-p)^2 (1 + s_2) (B-A)^2  .
\end{equation}
\end{lemma}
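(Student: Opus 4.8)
The inequality is a one-dimensional statement about real numbers, so the plan is to reduce it to a convexity/Cauchy--Schwarz estimate and then optimize. First I would write $C - pA - (1-p)B = (C-A) + (1-p)(A-B)$, which is the natural decomposition since $pA + (1-p)B = A - (1-p)(A-B)$. Setting $X = C-A$ and $Y = (1-p)(A-B)$, the left-hand side becomes $(X+Y)^2$ and the claimed bound reads $(X+Y)^2 \le (1+s_1)X^2 + (1+s_2)Y^2$ (note $(1-p)^2(B-A)^2 = Y^2$). So the lemma is exactly the elementary fact that for any reals $X,Y$ and any $s_1,s_2>0$ with $s_1 s_2 \ge 1$,
\[
(X+Y)^2 \le (1+s_1)X^2 + (1+s_2)Y^2 .
\]

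The key step is this last inequality. Expanding, it is equivalent to $2XY \le s_1 X^2 + s_2 Y^2$. By the AM--GM inequality, $s_1 X^2 + s_2 Y^2 \ge 2\sqrt{s_1 s_2}\,|X|\,|Y| \ge 2\sqrt{s_1 s_2}\, XY$, and since $s_1 s_2 \ge 1$ we have $2\sqrt{s_1 s_2}\, XY \ge 2XY$ when $XY \ge 0$; when $XY < 0$ the bound $2XY < 0 \le s_1X^2 + s_2Y^2$ is immediate. This settles the inequality in all cases. Substituting back $X = C-A$ and $Y = (1-p)(A-B)$ and using $p \in [0,1]$ (so that $Y^2 = (1-p)^2(B-A)^2$ with $(1-p)^2 \ge 0$) yields exactly~\eqref{usf}.

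I expect no serious obstacle here; the only thing to be careful about is the sign analysis in the AM--GM step (handling $XY<0$ separately, as above) and making sure the algebraic identity $C - pA - (1-p)B = (C-A) + (1-p)(A-B)$ is recorded cleanly. The hypothesis $s_1 s_2 \ge 1$ is used exactly once, to pass from $2\sqrt{s_1 s_2}\,XY$ to $2XY$; the constraint $p \in [0,1]$ is essentially cosmetic (it only guarantees $(1-p)^2$ matches the stated form) and the argument in fact works for any real $p$.
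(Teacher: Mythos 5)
Your proof is correct and takes essentially the same route as the paper's: both reduce \eqref{usf} via the identity $C - pA - (1-p)B = (C-A) + (1-p)(A-B)$ to the elementary inequality $2XY \le s_1 X^2 + s_2 Y^2$ and close it with AM--GM plus $s_1 s_2 \ge 1$. Your explicit handling of the $XY<0$ case is a slightly more careful write-up of the same argument, and your remark that $p\in[0,1]$ is unnecessary is accurate.
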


 \begin{proof}
 Equation~\eqref{usf} is equivalent to   
\begin{equation}\label{usf-B}
 2(1-p)(C-A)(A-B)
 \leq s_1 (C-A)^2 + s_2 (1-p)^2 (B-A)^2  .
 \end{equation}
 We derive the  above by  subtracting $(C-A)^2$ and $(1-p)^2(B-A)^2$ from both sides  of 
  equation~\eqref{usf} and rearranging. 
  
  A simple application of the AM-GM inequality implies that  
 $$2 \sqrt{s_1s_2} (1-p)(C-A)(A-B)
 \leq s_1 (C-A)^2 + s_2 (1-p)^2 (B-A)^2.
 $$
 Then, equation~\eqref{usf-B}  follows  from
 the observation that  the left-hand side of the above inequality is at least 
 $2(1-p)(C-A)(A-B)$, i.e., since $s_1s_2\geq 1$.
 \end{proof}

We can now prove the main lemma.

\begin{proof}[Proof of \cref{lem:exp-induct}]
Our goal is to prove \cref{miss}, let us recall its statement:
\begin{equation}
\tag{\ref{miss}}
F(\lambda_u) \mu(\mathrm{Var}_u(f)) \leq (F(\lambda'_v) - 1)\mu'(\mathrm{Var}_v(f')) + F(\lambda'_u) \mu'(\mathrm{Var}_u(f'))  .
\end{equation}

We will consider each of these local variances $\mu(\mathrm{Var}_u(f))$, $\mu'(\mathrm{Var}_v(f'))$, and $\mu'(\mathrm{Var}_u(f'))$.  
 We will express each of them as a sum over independent sets $S$ of $V'$.  We can then establish~\cref{miss} in a pointwise manner by considering the corresponding inequality for each independent set $S$.

Let us begin by looking at the general definition of the expected local variance $\mu'(\mathrm{Var}_x(f'))$ for any $x\in V'$. 
Applying the definition in~\cref{def:local-var} and then simplifying we obtain the following (a reader familar with the notation can apply~\cref{eq:bernoulli-variance} to skip directly to the last line):
\begin{align}
\nonumber
\lefteqn{
\mu'(\mathrm{Var}_x(f')) }
\\
&= \sum_{S \subseteq V' \setminus \{x\}} \mu'_{V' \setminus \{x\}}(S)\cdot \mathrm{Var}_x^S(f_S) 
\nonumber
\\
&= \sum_{S \subseteq V' \setminus \{x\}} \left( \sum_{T\subseteq \{x\}} 
\mu'(S\cup T)\right)
\left(
\frac12\sum_{T,U\subseteq \{x\}, T\neq U}
\mu_{x}^{'S}(T)\mu_x^{'S}(U)(f'(S\cup T)-f'(S\cup U))^2\right)
\nonumber
\\
&= \sum_{S \subseteq V' \setminus \{x\}} \left( \sum_{T\subseteq \{x\}} 
\mu'(S\cup T)\right)
\left( 
\mu_{x}^{'S}(x)\mu_x^{'S}(\emptyset)(f'(S)-f'(S\cup \{x\}))^2\right)
\nonumber
\\
&= \sum_{S\subseteq V'\setminus\{x\}}
\Big(\mu'(S)+\mu'(S\cup\{x\})\Big) \frac{\mu'(S)\mu'(S\cup\{x\})}{(\mu'(S)+\mu'(S\cup\{x\}))^2}
\Big(f'(S) - f'(S\cup\{x\}) \Big)^2  .
\label{lov}
\end{align}

Notice in~\cref{lov} that the only $S\subset V'\setminus\{x\}$ which contribute are those where $x$ is unblocked (i.e., no neighbor of $x$ is included in the independent set $S$) because we need that $S$ and $S\cup\{x\}$ are both independent sets and hence have positive measure in $\mu'$.

Let us now consider each of the local variances appearing in~\cref{miss} (expressed using carefully chosen summations that will allow us to prove~\eqref{miss} term-by-term in terms of $S$).

Let $Q_1:=\mu(\mathrm{Var}_u(f))$ denote the expected local variance of $f$ at $u$.
We will use~\eqref{lov} (applied to $T$ instead of $T'$); note that only $S$ where $u$ is unblocked (that is, when no neighbor of $u$ is occupied)
contribute to the local variance.  Moreover, we can express the expected local variance of $f$ at $u$ in terms of only those $S$ where $u\in S$.  In particular, consider an independent set $S'$ where $u\notin S'$.  Note that if $u$ is blocked (i.e., $N(u)\cap S'\neq\emptyset$) then the local variance at $u$ is zero for this term.  And for those $S'$ where $u\notin S'$ and $u$ is unblocked then the local variance  has the same contribution
as $S=S'\cup\{u\}$ times $1/\lambda_u$ (since $\mu(S'\cup\{u\})=\lambda_u \mu(S')$). Hence  the expected local variance of $f$ at $u$ is given by
\begin{equation*}
\begin{split}
Q_1 := \mu(\mathrm{Var}_u(f)) = \sum_{S\subseteq V; u\in S} \mu(S) \left(1+\frac{1}{\lambda_u}\right)\frac{\lambda_u}{1+\lambda_u}\frac{1}{1+\lambda_u}
\left(f(S\setminus\{u\})-f(S)\right)^2.
\end{split}
\end{equation*}
We have $f(S)=f'(S)$ (since $u\in S$) and $f(S\setminus\{u\}) = \frac{1}{1+\lambda'_v} f'(S\setminus\{u\}) - \frac{\lambda'_v}{1+\lambda'_v} f'(S\setminus\{u\}\cup\{v\})$.
Plugging these in and simplifying we obtain the following:
\begin{equation}\label{q1}
\begin{split}
Q_1 = \frac{1+\lambda'_v}{1+\lambda'_u+\lambda'_v} \sum_{S\subseteq V; u\in S} \mu(S)
\left(f'(S)-\frac{1}{1+\lambda'_v} f'(S-u) - \frac{\lambda'_v}{1+\lambda'_v} f'(S-u+v)\right)^2.
\end{split}
\end{equation}

We now consider $Q_2:=\mu'(\mathrm{Var}_u(f'))$.   As we did for $Q_1$ we can express $Q_2$ as a sum over indpendent sets $S$ where $u\in S$.  In this case for an independent set $S$ where $u\in S$, consider $S'=S\setminus\{u\}$ and note that the following hold
$$
\mu'(S') = \mu(S') \frac{1}{1+\lambda'_v} = \mu(S) \frac{1}{\lambda'_u} \frac{1}{1+\lambda'_v},
$$
and
$$
\mu'(S) = \mu(S).
$$
Hence, we have the following:
\begin{align}
\nonumber
Q_2 := \mu'(\mathrm{Var}_u(f')) 
& = \sum_{S\subseteq V; u\in S} \mu(S) \left(1+\frac{1}{\lambda'_u}\frac{1}{1+\lambda'_v}\right)\frac{\lambda'_u}{1+\lambda'_u}\frac{1}{1+\lambda'_u}
\left(f'(S-u)-f'(S)\right)^2
\\
& = \left(\lambda'_u + \frac{1}{1+\lambda'_v} \right) \frac{1}{(1+\lambda'_u)^2} \sum_{S\subseteq V; u\in S} \mu(S)  \left(f'(S)-f'(S-u)\right)^2.
\label{q2}
\end{align}

Finally, we consider $\mu'(\mathrm{Var}_v(f'))$, the expected local variance of $f'$ at $v$. We will establish a lower bound which we will denote by $Q_3$ (note, $Q_1$ and $Q_2$ were identities but here we will have an inequality).  

To compute $\mu'(\mathrm{Var}_v(f'))$, the expected local variance of $f'$ at $v$, we need to generate an independent set~$S'$ from $\mu'$. Only those~$S'$ where $v$ is unblocked
(that is where $u$ is not in $S'$) contribute to the local variance. We can generate $S'$ by generating $S$ from $\mu$ (whether
we add or do not add $v$ does not change the contribution to the local variance).  As in~\cref{lov}, we obtain the following:
\begin{align*}
\mu'(\mathrm{Var}_v(f')) 
& =
\sum_{S'\subseteq V;  u\not\in S'} \mu(S') \frac{1}{1+\lambda'_v} \frac{\lambda'_v}{1+\lambda'_v}  \left(f'(S'\cup\{v\})-f'(S')\right)^2  \\
& \geq
\sum_{\substack{S'\subseteq V;  u\not\in S'\\ \textrm{$u$ is unblocked}}} \mu(S') \frac{1}{1+\lambda'_v} \frac{\lambda'_v}{1+\lambda'_v}  \left(f'(S'\cup\{v\})-f'(S')\right)^2  \\
& =
\sum_{S\subseteq V; u\in S} \mu(S) \frac{1}{\lambda'_u} \frac{1}{1+\lambda'_v} \frac{\lambda'_v}{1+\lambda'_v}  \left(f'(S\cup\{v\}\setminus\{u\})-f'(S\setminus\{u\})\right)^2.
\end{align*}

Let $Q_3$ denote the lower bound we obtained above:
\begin{equation}
    \label{q3}
 Q_3 := \frac{1}{\lambda'_u} \frac{1}{1+\lambda'_v} \frac{\lambda'_v}{1+\lambda'_v}  \sum_{S\subseteq V; u\in S} \mu(S) \left(f'(S\cup\{v\}\setminus\{u\})-f'(S\setminus\{u\})\right)^2 
\geq \mu'(\mathrm{Var}_v(f'))  .
\end{equation}

Plugging in~\eqref{q1},~\eqref{q2},~\eqref{q3} we obtain that \cref{miss} follows from the following inequality:
\begin{equation}\label{miss2}
F(\lambda_u) Q_1 \leq (F(\lambda'_v) - 1) Q_3+ F(\lambda'_u) Q_2  .
\end{equation}
We will establish~\eqref{miss2} term-by-term, that is, for each $S$ in the sums of~\eqref{q1},~\eqref{q2},~\eqref{q3}. Fix $S\subseteq V$ such that $u\in S$ and let $A=f'(S-u)$, $B=f'(S-u+v)$, and $C=f'(S)$.
We need to show
\begin{equation}\label{etopr}
\begin{split}
\frac{1+\lambda'_v}{1+\lambda'_u+\lambda'_v}\left(C -\frac{1}{1+\lambda'_v} A -  \frac{\lambda'_v}{1+\lambda'_v}B \right) ^2 F\left(\frac{\lambda'_u}{1+\lambda'_v}\right) \hspace{1.5in}
\\ \leq
\frac{1 + \lambda'_v + \lambda'_u }{ 1+\lambda'_v }  \frac{1}{(1+\lambda'_u)^2}\left(C-A\right)^2 F(\lambda'_u) + \frac{1}{\lambda'_u(1+\lambda'_v)^2}(B-A)^2\left(F(\lambda'_v) - 1\right)  .
\end{split}
\end{equation}
Let $p=1/(1+\lambda'_v)$. We will match~\eqref{usf} to~\eqref{etopr}, by first dividing both sides of~\eqref{etopr} by $\frac{1+\lambda'_v}{1+\lambda'_u+\lambda'_v}F\left(\frac{\lambda'_u}{1+\lambda'_v}\right)$ and then choosing
$$
1+s_1 = \left(\frac{1+\lambda'_u+\lambda'_v}{(1+\lambda'_v)(1+\lambda'_u)}\right)^2 \cdot \frac{F(\lambda'_u)}{F\left(\frac{\lambda'_u}{1+\lambda'_v}\right)}\quad\mbox{and}\quad
1+s_2 = \frac{1+\lambda'_u+\lambda'_v}{\lambda'_u(1+\lambda'_v){\lambda'_v}^2} \cdot \frac{F(\lambda'_v)-1}{F\left(\frac{\lambda'_u}{1+\lambda'_v}\right)}
 .
$$
Note that with this choice of $s_1$ and $s_2$ equations 
\eqref{usf} and \eqref{etopr} are equivalent, and hence  to prove~\eqref{etopr} it
is enough to show $s_1 s_2\geq 1$.

\begin{claim} $s_1s_2\geq 1.$
\label{claim:s1s2}
\end{claim}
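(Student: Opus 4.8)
The plan is to substitute the explicit formula $F(x)=1000(1+x)^{2}e^{1.3x}$ into the definitions of $s_{1},s_{2}$, simplify, and thereby reduce \Cref{claim:s1s2} to a single‑variable inequality provable by elementary calculus. Write $a:=\lambda'_{u}$ and $b:=\lambda'_{v}$, and recall $\lambda_{u}=a/(1+b)$, so that $F(\lambda_{u})=1000\bigl(\tfrac{1+a+b}{1+b}\bigr)^{2}e^{1.3a/(1+b)}$. Plugging this in, the rational prefactor $\bigl(\tfrac{1+a+b}{(1+b)(1+a)}\bigr)^{2}$ in the definition of $1+s_{1}$ cancels \emph{exactly} against the algebraic part of $F(a)/F(\lambda_{u})$, leaving the clean identity
\[
1+s_{1}=\exp\!\Bigl(\tfrac{1.3ab}{1+b}\Bigr)=:e^{t},\qquad t:=\tfrac{1.3ab}{1+b}\in(0,0.75),
\]
while the same substitution gives (without full cancellation, since the $e^{1.3a/(1+b)}$ and the $-1$ survive)
\[
1+s_{2}=\frac{\bigl(1000(1+b)^{2}e^{1.3b}-1\bigr)(1+b)}{1000\,ab^{2}(1+a+b)\,e^{1.3a/(1+b)}}.
\]

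Next I would rephrase the target: since $s_{1}=e^{t}-1>0$, the inequality $s_{1}s_{2}\ge 1$ is equivalent to $1+s_{2}\ge e^{t}/(e^{t}-1)$, i.e.\ to $(1+s_{2})(1-e^{-t})\ge 1$. This rewriting is useful because $e^{-t}$ combines with the exponential in $1+s_{2}$: as $\tfrac{1.3a}{1+b}+t=1.3a$, we get the exact simplification $(1+s_{2})e^{-t}=\dfrac{\bigl(1000(1+b)^{2}e^{1.3b}-1\bigr)(1+b)}{1000\,ab^{2}(1+a+b)\,e^{1.3a}}$, hence
\[
(1+s_{2})(1-e^{-t})=\frac{\bigl(1000(1+b)^{2}e^{1.3b}-1\bigr)(1+b)}{1000\,ab^{2}(1+a+b)}\;\bigl(e^{-1.3a/(1+b)}-e^{-1.3a}\bigr).
\]
I would then lower‑bound the last factor by $e^{-u}-e^{-v}=\int_{u}^{v}e^{-s}\,ds\ge (v-u)e^{-v}$ with $u=\tfrac{1.3a}{1+b}<v=1.3a$, which yields $e^{-1.3a/(1+b)}-e^{-1.3a}\ge \tfrac{1.3ab}{1+b}e^{-1.3a}$. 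Substituting and cancelling reduces $(1+s_{2})(1-e^{-t})\ge 1$ to $1.3\bigl((1+b)^{2}e^{1.3b}-0.001\bigr)\ge b(1+a+b)e^{1.3a}$ for all $a,b\in(0,1.1]$. The right side is increasing in $a$ and the left does not depend on $a$, so it suffices to take $a=1.1$; and since $(1+b)^{2}e^{1.3b}\ge 1$ we may replace the left side by the smaller quantity $1.2987\,(1+b)^{2}e^{1.3b}$. Thus everything comes down to the single‑variable inequality
\[
G(b):=\frac{b(2.1+b)\,e^{1.43-1.3b}}{1.2987\,(1+b)^{2}}\;\le\;1\qquad\text{for all }b\in(0,1.1].
\]

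Finally I would establish $G(b)\le 1$ by a unimodality argument. One computes $(\ln G)'(b)=\tfrac1b+\tfrac1{2.1+b}-\tfrac2{1+b}-1.3$ and $(\ln G)''(b)=-\tfrac1{b^{2}}-\tfrac1{(2.1+b)^{2}}+\tfrac2{(1+b)^{2}}$, and $(\ln G)''<0$ on $(0,1.1]$ because $\tfrac1{b^{2}}\ge\tfrac2{(1+b)^{2}}$ is equivalent to $(1+b)^{2}\ge 2b^{2}$, which holds for all $b\le 1+\sqrt2$. Hence $\ln G$ is strictly concave, so $G$ is unimodal on $(0,1.1]$; since $(\ln G)'(b)\to+\infty$ as $b\to0^{+}$ while $(\ln G)'(1.1)<0$, the maximum is attained at the unique interior zero $b^{\star}$ of $(\ln G)'$, and evaluating $(\ln G)'$ at $0.434$ and $0.436$ localizes $b^{\star}\in(0.434,0.436)$. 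On that short interval, using that $b(2.1+b)$ is increasing and $e^{1.43-1.3b}$ decreasing, $G(b)\le \dfrac{0.436\cdot 2.536\cdot e^{1.43-1.3\cdot 0.434}}{1.2987\cdot 1.434^{2}}<1$, which finishes the argument. The one point requiring care is that this final margin is thin (the maximum of $G$ is $\approx 0.98$), so the specific constants $1.3$ in the exponent of $F$ and $1000$ in its prefactor are genuinely used; by contrast every earlier reduction is robust and loses essentially nothing.
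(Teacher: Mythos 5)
Your proof is correct and, although it ends up in the same neighborhood as the paper's argument (both exploit the exact identity $1+s_1=e^{t}$ with $t=1.3\lambda_u'\lambda_v'/(1+\lambda_v')$, and both ultimately fix $\lambda_u'$ at the extreme value $1.1$ and check a one-variable inequality by calculus), the intermediate reductions are genuinely different. You transform $s_1s_2\ge 1$ into $(1+s_2)(1-e^{-t})\ge 1$, observe that the exponentials recombine so that $(1+s_2)e^{-t}$ has denominator $e^{1.3a}$, and then apply the single clean integral bound $e^{-u}-e^{-v}\ge (v-u)e^{-v}$; this collapses everything into one two-variable inequality that is monotone in $a$. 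The paper instead first peels off the ``$-1$'' in $F(\lambda_v')-1$ at the cost of a factor $999/1000$, then invokes an auxiliary product inequality (their Lemma~3.5) which it proves via two separate sub-bounds, one reduced by monotonicity and the other by concavity in $z=xy$. The two routes therefore land on different one-variable problems: yours has the critical point near $b^\star\approx 0.435$ and is dispatched by showing $\ln G$ is strictly concave (a two-line derivative check) and bracketing $b^\star$; the paper's is near $x\approx 0.308$ and is handled via Sturm sequences on a quartic. Your argument is somewhat more streamlined, with a single reduction chain and a softer final verification, though as you note the final margin ($G(b^\star)\approx 0.98$) is comparably thin, so both proofs genuinely need the specific constants $1.3$ and $1000$ in $F$.
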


This completes the proof of the lemma.
\end{proof}

We use the following lemma to prove \cref{claim:s1s2}
 \begin{lemma}
 \label{lem:eeee1}
 Let $\alpha=1.1$ and $\beta=1.3$. Suppose $x,y\in (0,\alpha]$ are such that $(1+x)y\in [0,\alpha]$. Then
 \begin{equation}\label{zeeee1}
 \left( \exp(\beta xy) - 1\right)
 \left(\frac{(1+x)}{(1+y) y x^2} \exp(\beta(x-y))
 - 1\right)\geq 1.1  .
 \end{equation}
 \end{lemma}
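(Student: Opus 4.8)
I would first reduce \eqref{zeeee1} to a one–variable inequality. Fix $x\in(0,\alpha]$ and regard the left-hand side of \eqref{zeeee1} as a function $L(y)$ of $y$; I claim $L$ is non-increasing on the feasible range, which (since $(1+x)y\le\alpha$ and $x>0$ force $y\le \alpha/(1+x)<\alpha$) is the interval $(0,\alpha/(1+x)]$. Granting this, it suffices to check \eqref{zeeee1} at $y=\alpha/(1+x)$.

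To prove $L$ is non-increasing, write $L(y)=f_1(y)f_2(y)$ with $f_1(y)=\exp(\beta xy)-1$ and $f_2(y)=c\,g(y)-1$, where $c=(1+x)\exp(\beta x)/x^2>0$ and $g(y)=\exp(-\beta y)/\big(y(1+y)\big)$. A log-derivative gives $g$ strictly decreasing with $-g'(y)/g(y)=\beta+\tfrac1y+\tfrac1{1+y}$, so $f_2'=cg'<0$, while $f_1,f_1'>0$. Then $L'=f_1'f_2-f_1\,c|g'|$. If $f_2(y)\le0$ this is visibly negative; if $f_2(y)>0$ then $L'\le0$ is equivalent to $\tfrac{f_1'}{f_1}\le \tfrac{c|g'|}{f_2}$. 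Bounding the left side via $\exp(\beta xy)-1\ge\beta xy$ gives $\tfrac{f_1'}{f_1}=\beta x\big(1+\tfrac1{\exp(\beta xy)-1}\big)\le \beta x+\tfrac1y$, and since $cg=f_2+1>f_2>0$ the right side satisfies $\tfrac{c|g'|}{f_2}\ge \tfrac{|g'|}{g}=\beta+\tfrac1y+\tfrac1{1+y}$. So it is enough that $\beta x\le\beta+\tfrac1{1+y}$: immediate if $x\le1$, and if $1<x\le\alpha$ the constraint gives $y<\alpha/2$, whence $\tfrac1{1+y}>\tfrac2{2+\alpha}$, while $\beta(x-1)\le\beta(\alpha-1)=0.13\le\tfrac2{2+\alpha}$.

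Substituting $y=\alpha/(1+x)$ (so $\beta xy=\tfrac{\beta\alpha x}{1+x}$, $x-y=\tfrac{x^2+x-\alpha}{1+x}$, $\tfrac{1+x}{(1+y)yx^2}=\tfrac{(1+x)^3}{\alpha x^2(1+\alpha+x)}$), the inequality to prove becomes: for all $x\in(0,\alpha]$,
\[
\left(\exp\!\Big(\tfrac{\beta\alpha x}{1+x}\Big)-1\right)\!\left(\tfrac{(1+x)^3}{\alpha x^2(1+\alpha+x)}\exp\!\Big(\tfrac{\beta(x^2+x-\alpha)}{1+x}\Big)-1\right)\ \ge\ \alpha .
\]
On $(0,\alpha]$ the first factor is increasing, the exponential in the second factor is increasing (its exponent has positive derivative $\tfrac{x^2+2x+\alpha+1}{(1+x)^2}$), the rational prefactor $\tfrac{(1+x)^3}{x^2(1+\alpha+x)}$ is decreasing (log-derivative $\tfrac3{1+x}-\tfrac2x-\tfrac1{1+\alpha+x}<0$ there), and the whole expression blows up as $x\to0^+$. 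The plan for this last step is to split $(0,\alpha]$ into finitely many subintervals and on each use these monotonicities together with the elementary bounds $\exp(t)-1\ge t+\tfrac{t^2}2$ (for $t\ge0$) and $\exp(-s)\ge1-s+\tfrac{s^2}2-\tfrac{s^3}6$ (for $s\ge0$), reducing the claim on that subinterval to a finite numerical check. This finite verification is where essentially all the computational work lies: the minimum of the left-hand side over $(0,\alpha]$ is about $1.38$, attained near $x\approx0.33$, so the margin above $\alpha=1.1$ is real but modest and the subintervals must be taken fine enough near the minimizer for the (deliberately crude) monotone estimates to close; conceptually, though, nothing beyond monotonicity and Taylor estimates is needed. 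I expect this bookkeeping to be the main obstacle, with the genuinely non-obvious point being the $y$-monotonicity in Step~1.
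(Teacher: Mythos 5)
Your proof is correct, and it takes a genuinely different route from the paper. The paper splits the left-hand side of \eqref{zeeee1} into two independently bounded pieces --- it shows $\frac{(1+x)}{(1+y)x}\exp(\beta(x-y))\geq 1.15$ (a two-variable inequality reduced by $y$-monotonicity to a one-variable check) and separately $(\exp(\beta z)-1)(1.15/z - 1)\geq 1.1$ with $z=xy$ (handled by a concavity argument) --- and then multiplies the two bounds. You instead keep the full product together, prove it is non-increasing in $y$ on the feasible interval $(0,\alpha/(1+x)]$ via the log-derivative comparison (which is clean and correct: the key estimates $f_1'/f_1\leq\beta x+1/y$, $c|g'|/f_2\geq\beta+1/y+1/(1+y)$, and $\beta(x-1)\leq 1/(1+y)$ all check out), and thereby reduce to a single one-variable inequality at $y=\alpha/(1+x)$. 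Your reduction is arguably more conceptual and avoids the ad hoc split at the constant $1.15$; the price is that the resulting one-variable inequality is messier than the paper's two pieces (neither a polynomial critical-point analysis nor a concavity argument applies directly), and you leave its verification as a sketch (subintervals plus monotonicity and Taylor bounds). That final numerical step is of the same character as the paper's own reliance on Sturm sequences and interval arithmetic, so I would not call it a gap, but it is the part of your write-up that would need to be carried out carefully before the proof is complete, especially near the minimizer $x\approx 0.33$ where the margin over $1.1$ is modest.
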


 \begin{proof}
 We will show that for $x,y,(1+x)y\in (0,\alpha]$ we have
 \begin{equation}\label{eeeeX1}
 \frac{(1+x)}{(1+y) x} \exp(\beta(x-y)) \geq 1.15  ,
 \end{equation}
 and
 \begin{equation}\label{eeeeX2}
 \left( \exp(\beta xy) - 1\right)
 \left(\frac{1.15}{xy} 
 - 1\right)\geq 1.1  .
 \end{equation}
 To see that~\eqref{eeeeX1} and~\eqref{eeeeX2} imply~\eqref{zeeee1} note 
  \begin{equation*}
 \left( \exp(\beta xy) - 1\right)
 \left(\frac{(1+x)}{(1+y) y x^2} \exp(\beta(x-y))
 - 1\right)\geq 
  \left( \exp(\beta xy) - 1\right)
 \left(\frac{1.15}{ y x}
 - 1\right)
 \geq 1.1  ,
 \end{equation*}
 where the first inequality follows from~\eqref{eeeeX1} and the second from~\eqref{eeeeX2}.

 Note that 
 the constraints on $x,y$ imply that $y+xy\leq\alpha$ and $xy\leq \alpha y$.
 Hence $xy \leq \alpha/(1+1/\alpha)$. To prove~\eqref{eeeeX2} it is sufficient 
 to show for $z\in [0,\alpha/(1+1/\alpha)]$
 \begin{equation}\label{eeeeX3}
 \left(\exp(\beta z) - 1\right)\left(1.15 - z \right) - 1.1 z \geq 0  .
 \end{equation}
 We have
 $$
 \frac{\partial^2}{\partial z^2}\Big( \left(\exp(\beta z) - 1\right)\left(1.15 - z \right) - 1.1 z\Big) 
 = \exp(\beta z)\beta \left(1.15\beta - \beta x - 2 \right) < 0  .
 $$
 Hence $\left(\exp(\beta z) - 1\right)\left(1.15 - z \right) - 1.1 z$ is concave
 and we only need to check~\eqref{eeeeX3} for the endpoints of the interval; 
 for $z=0$ LHS of~\eqref{eeeeX3} is zero, for $z=\alpha/(1+1/\alpha)$ the
 LHS of~\eqref{eeeeX3} has value larger than $0.005$. This concludes the proof
 of~\eqref{eeeeX2}.

 To prove~\eqref{eeeeX1} note that
 \begin{equation}\label{eeeeX4}
 \frac{\partial}{\partial y} (1+x)\exp(\beta(x-y)) - 1.15(1+y) x
 = -(1+x)\beta\exp(\beta(x-y)) - 1.15x < 0  .
 \end{equation}
 Hence we only need to prove~\eqref{eeeeX1} for $y=\alpha/(1+x)$; this simplifies to showing
 \begin{equation}\label{eeeeX5}
 \exp(1.3(x-1.1/(1+x))) \geq \left(\frac{1.15(2.1+x) x}{(1+x)^2}\right)  .
 \end{equation}
 For $x=0$ and $x=1.1$ we have that~\eqref{eeeeX5} is satisfied (using interval arithmetic). 
 Let 
 $$
 Q(x) := 1.3(x-1.1/(1+x)) - \ln \left(\frac{1.15(2.1+x) x}{(1+x)^2}\right)  .
 $$
 The critical points of $Q(x)$ are roots of 
 $$
 1330 x^4 + 5330x^3 + 8290x^2 + 3733x - 2100  . 
 $$
 Since this polynomial is increasing when $x$ is non-negative, it has exactly one positive real root,
 which is in the interval $[0.30765554,0.30765555]$. 
 The value of $Q(x)$ on both endpoints of the
 interval is at least $0.0032$. The derivative of $Q(x)$ (a rational function) is bounded in 
 absolute value by $1$ on the interval and hence $Q(x)>0$ on the entire interval (in particular
 at the critical point). This proves~\eqref{eeeeX5}.
 \end{proof}

We can now complete the proof of \cref{claim:s1s2}.

 \begin{proof}[Proof of~\cref{claim:s1s2}]
 We will use the following substitution to simplify the expression $F(x)=(1+x)^2 H(x)$. Note that $H(x)=1000\exp(1.3x)=1000\exp(\beta x)$. In terms of $H(x)$ we have
 $$
 1+s_1 = \frac{H(\lambda'_u)}{H\left(\frac{\lambda'_u}{1+\lambda'_v}\right)}\quad\mbox{and}\quad
 1+s_2 = \frac{1+\lambda'_v}{(1+\lambda'_u+\lambda'_v)\lambda'_u}\cdot\frac{(1+\lambda'_v)^2 H(\lambda'_v) - 1}{H\left(\frac{\lambda'_u}{1+\lambda'_v}\right)}  .
 $$
 Let $\lambda'_v=x$ and $\lambda'_u=y(1+x)$. We have
 $$
 1+s_1 = \frac{H(y(1+x))}{H(y)}\quad\mbox{and}\quad
 1+s_2 = \frac{1+x}{y(1+y)x^2}\cdot\frac{ H(x) - \frac{1}{(1+x)^2}}{H(y)}  .
 $$
Recall, our goal is  to show $s_1s_2\geq 1$. First we show that for $x,y\in (0,1.1]$ such that $(1+x)y\leq 1.1$ we have
 \begin{equation}\label{ttqw}
 s_2 = \frac{1+x}{y(1+y)x^2}\cdot\frac{ H(x) - \frac{1}{(1+x)^2}}{H(y)} - 1 
 \geq 
 \frac{999}{1000}\left(
 \frac{1+x}{y(1+y)x^2}\cdot\frac{ H(x)}{H(y)} - 1 
 \right)  .
 \end{equation}
 Note that~\eqref{ttqw} is equivalent to
 \begin{equation}\label{equo}
 \frac{1+x}{x^2} \exp(1.3x)- \frac{1}{x^2(1+x)}
 \geq 
  y(1+y) \exp(1.3y)  .
 \end{equation}
 Note that the RHS of~\eqref{equo} is increasing in $y$ and hence
 it is sufficient to show~\eqref{equo} for the maximal value of $y=1.1/(1+x)$;
 this is equivalent to
 \[
 (1+x)^3 \exp(1.3x)- (1+x)
 \geq 
 1.1x^2 (2.1 + x) \exp(1.43/(1+x))  ,
 \]
 the last inequality follows from
 \[(1+x)^3 \left( 1+1.3x + \frac{1.3^2}{2} x^2 + \frac{1.3^3}{6}x^3 + \frac{1.3^4}{24}x^4\right) - (1+x)
 \geq 
 1.1x^2 (2.1 + x) \exp(1.43)   ,
 \]
 checked using Sturm sequences. This concludes the proof of \eqref{ttqw}. 
 
 Note that~\cref{lem:eeee1} is equivalent to the following
 \begin{equation*}
s_1
 \left(\frac{(1+x)}{y (1+y) x^2} \frac{H(x)}{H(y)}
 - 1\right) = 
 \left( \exp(\beta xy) - 1\right)
 \left(\frac{(1+x)}{(1+y) y x^2} \exp(\beta(x-y))
 - 1\right) 
 \geq 1.1  ,
 \end{equation*}
and hence
$$
s_1\geq  1.1\left(\frac{(1+x)}{y (1+y) x^2} \frac{H(x)}{H(y)}
 - 1\right)^{-1},
$$
 which combined with~\eqref{ttqw} yields $s_1 s_2\geq 1.1(999/1000) > 1$, concluding the proof.
 \end{proof}

 \section{Spectral Independence for Trees}
 \label{sec:SI}

 Let $G=(V,E)$ be a graph. Suppose we have a set of fugacities $\lambda_i$ for each $i\in V$, and consider the corresponding Gibbs distribution.
 Recall that the influence matrix ${\mathcal{I}}_G$ has $u,v$ entry ${\mbox{Inf}}(u\ra v)$, where
 $$
 {\mbox{Inf}}(u\ra v) = \mu(v\in I|u\in I) - \mu(v\in I|u\not\in I)  .
 $$
 Let $u,v$ be any two vertices in a graph. We have
 \begin{equation}\label{infbound}
 \big|{\mbox{Inf}}(v\ra u)\big| \leq \max\{\mu(u\in I|v\in I), \mu(u\in I|v\not\in I)\}\leq\frac{\lambda_u}{1+\lambda_u}  .
 \end{equation}
 We have
 \begin{equation}\label{e3o}
 {\mbox{Inf}}(v\ra u) \mu(v\in I)\mu(v\not\in I)= {\mbox{Inf}}(u\ra v) \mu(u\in I) \mu(u\not\in I)  .
 \end{equation}
 Let $D$ be diagonal matrix with entries $\mu(v\in I)\mu(v\not\in I)$. Equation~\eqref{e3o} means
 that $D {\mathcal{I}}_G$ is symmetric. That also means that $D^{1/2} {\mathcal{I}}_G D^{-1/2}$ is symmetric (since
 it is obtained from $D {\mathcal{I}}_G$ by multiplying by the same diagonal matrix on the left and on
 the right). The $u,v$ entry in $D^{1/2} {\mathcal{I}}_G D^{-1/2}$ is
 \begin{equation}\label{esym}
 \begin{split}
 {\mbox{Inf}}(u\ra v) \left(\mu(v\in I) \mu(v\not\in I)\right)^{-1/2} \left(\mu(u\in I) \mu(u\not\in I)\right)^{1/2} = \\
 {\mbox{Inf}}(v\ra u) \left(\mu(u\in I) \mu(u\not\in I)\right)^{-1/2} \left(\mu(v\in I) \mu(v\not\in I)\right)^{1/2}  ,
 \end{split}
 \end{equation}
 which is equal to $\pm\sqrt{{\mbox{Inf}}(u\ra v){\mbox{Inf}}(v\ra u)}$ (take geometric mean of the sides of~\eqref{esym}).
 We will call $M = D^{1/2} {\mathcal{I}}_G D^{-1/2}$ the symmetrized influence matrix (since it is similar to the influence matrix, it has the same spectral
 radius).

 We will prove  the following result.
 \begin{lemma}\label{lep}
 For any forest $T$ with fugacities in $ [0,1.3]$ the spectral radius of the influence matrix of the hard-core model on $T$ is bounded by 10000. 
 \end{lemma}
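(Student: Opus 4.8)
The plan is to bound the spectral radius of the symmetrized influence matrix $M=D^{1/2}\mathcal{I}_T D^{-1/2}$, which by the excerpt is symmetric, similar to $\mathcal{I}_T$, and has $M_{uv}=\pm\sqrt{{\mbox{Inf}}(u\to v){\mbox{Inf}}(v\to u)}$. For a forest the influence matrix is block-diagonal over the tree components, and $\lambda_v=0$ vertices are always empty and may be deleted, so it suffices to treat a single tree $T$ with all fugacities in $(0,1.3]$. Write $m_v$ for the occupation probability of $v$, put $x_v:=m_v/(1-m_v)$, root $T$ arbitrarily, and let $\pi_c$ denote the occupation probability of $c$ in the pendant subtree hanging from $c$, so that the usual recursion $\pi_v=\lambda_v\rho_v/(1+\lambda_v\rho_v)$ holds with $\rho_v=\prod_{c\text{ child of }v}(1-\pi_c)$.

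First I would establish a multiplicativity property special to trees: for $u,v$ with geodesic $u=w_0,\dots,w_\ell=v$, $M_{uv}=\prod_{i=0}^{\ell-1}M_{w_iw_{i+1}}$. This follows because $w_1$ separates $u$ from $v$, so conditioning on the state of $w_1$ renders the occupation of $v$ independent of that of $u$; expanding $\mu(v\in I\mid u\in I)$ over the two states of $w_1$ and subtracting the same identity with $u\notin I$ gives ${\mbox{Inf}}(u\to v)={\mbox{Inf}}(u\to w_1){\mbox{Inf}}(w_1\to v)$, and one iterates (similarly for the reverse direction). For a child $c$ of $p$ one has $\mu(c\in I\mid p\in I)=0$, hence ${\mbox{Inf}}(p\to c)=-\pi_c$; combined with the elementary identities $m_c=\pi_c(1-m_p)$ and the resulting recursion $x_c=\pi_c/(1-\pi_c+x_p)$, a short computation yields $|M_{pc}|=\sqrt{x_px_c}$ and $\sgn M_{uv}=(-1)^{d(u,v)}$. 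Thus $M$ is, up to signs, the ``geometric kernel'' of $T$ with edge weights $\sqrt{x_ux_w}$; note $x_v\le\pi_v/(1-\pi_v)=\lambda_v\rho_v$ and $\pi_v\le\lambda_v/(1+\lambda_v)\le 1.3/2.3$, so $x_ux_w<1$ on every edge.

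The key observation I would exploit is that this geometric kernel has a \emph{tree-sparse} inverse. Let $L$ be supported on the diagonal and the edges of $T$ with $L_{uw}=\sqrt{x_ux_w}/(1-x_ux_w)$ for $u\sim w$ and $L_{vv}=1+\sum_{w\sim v}x_vx_w/(1-x_vx_w)$. Using the multiplicative formula for $M$ one checks directly that $LM=I$, so $M=L^{-1}$, and since (as shown next) $L$ is positive definite, the spectral radius of $\mathcal{I}_T$ equals $1/\lambda_{\min}(L)$; hence it suffices to prove $\lambda_{\min}(L)\ge 10^{-4}$. Since $L$ is tree-structured, I would Schur-eliminate $L-\delta I$ from the leaves toward the root: with $\hat L_v:=(L_{vv}-\delta)-\sum_{c\text{ child of }v}L_{vc}^2/\hat L_c$ one has $L-\delta I\succ 0$ iff every $\hat L_v>0$. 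I would then induct from the leaves on the invariant $\hat L_c\ge (1-\epsilon)/(1-x_cx_{p(c)})$ for non-root $c$, with small absolute constants $\epsilon,\delta$. The inequality that makes the induction close is
\[
\sum_{c\text{ child of }v}\frac{x_vx_c}{1-x_vx_c}=\frac{\pi_v}{1+x_{p(v)}}\sum_{c}\frac{\pi_c}{1-\pi_c}\le \lambda_v\rho_v\cdot 2\log\frac1{\rho_v}\le\frac{2\lambda_v}{e}<1\qquad(\lambda_v\le 1.3),
\]
where the equality uses $x_c=\pi_c/(1-\pi_c+x_v)$, the next step uses $\pi_v\le\lambda_v\rho_v$, the elementary bound $\tfrac{\pi}{1-\pi}\le 2\log\tfrac1{1-\pi}$ on $[0,1.3/2.3]$, and $\sum_c\log\tfrac1{1-\pi_c}=\log\tfrac1{\rho_v}$, while the last step uses $\max_{\rho\in(0,1]}\rho\log\tfrac1\rho=1/e$. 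Combined with $1+\tfrac{x_vx_p}{1-x_vx_p}=\tfrac1{1-x_vx_p}\ge 1$, this closes the induction and the analogous estimate at the root, so $\lambda_{\min}(L)$ is bounded below by an absolute constant far larger than $10^{-4}$.

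The main obstacle is the middle step: recognizing that the spectral radius has to be controlled through the \emph{inverse} of the influence matrix, which on a tree is a sparse Schr\"odinger-type operator, rather than through $\ell_1/\ell_\infty$ norms of $M$. The latter genuinely fail here: for near-critical trees $\sum_{v:\,d(u,v)=k}|M_{uv}|=\Theta(1)$ for every $k$, so $\norm{M}_\infty=\infty$, and the weighted adjacency $N$ with $N_{uw}=\sqrt{x_ux_w}$ can have spectral radius exceeding $1$; one must instead extract the $\sqrt{\mathrm{branching}}$ gain special to the $\ell_2$ geometry of trees, which is exactly what the identity $M^{-1}=L$ makes transparent. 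The only quantitative input is the displayed inequality, and this is precisely where the bound $\lambda\le 1.3$ enters, through $2\cdot 1.3/e<1$.
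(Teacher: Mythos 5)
Your proposal is correct, and it takes a genuinely different route from the paper. The paper proves the stronger \cref{lep2} by leaf-removal induction: it maintains the positive-semidefinite bound $M\preceq\mathrm{diag}\bigl(400/(1.3-\lambda_v^{0.78})\bigr)$, and the inductive step reduces (via row/column operations) to verifying positivity of a $2\times 2$ determinant, which is where the potential function $F=1/H$ with $H(x)\propto 1.3-x^{0.78}$ is tuned. You instead identify $M$ exactly as the path-product kernel with edge weights $\sqrt{x_ux_w}$ (I checked: $m_c=\pi_c(1-m_p)$ does give $|M_{pc}|=\sqrt{x_px_c}$ via the reversibility identity~\eqref{e3o}, and $x_vx_c=\tfrac{x_v\pi_c}{1-\pi_c+x_v}<1$ on every edge since $\pi_c<1$), invert it explicitly as a tree-sparse operator $L$ (this inverse formula is correct; the signs are absorbed by the diagonal similarity $\mathrm{diag}((-1)^{d(r,v)})$), and lower-bound $\lambda_{\min}(L)$ by leaf-to-root Cholesky pivots. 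The quantitative heart of your argument,
\begin{equation*}
\sum_{c}\frac{x_vx_c}{1-x_vx_c}=m_v\sum_c\frac{\pi_c}{1-\pi_c}\le \lambda_v\rho_v\cdot 2\log\frac{1}{\rho_v}\le\frac{2\lambda_v}{e}<1,
\end{equation*}
checks out: the identity follows from $1-x_vx_c=\tfrac{(1-\pi_c)(1+x_v)}{1-\pi_c+x_v}$, the bound $\tfrac{\pi}{1-\pi}\le 2\log\tfrac{1}{1-\pi}$ holds on $[0,1.3/2.3]$ (the derivatives cross at $\pi=1/2$ but the slack there is large enough to cover the remaining interval), and with $S_v\le 2.6/e<0.957$ the invariant $\hat L_c\ge(1-\epsilon)/(1-x_cx_{p(c)})$ closes with, e.g., $\epsilon=0.02$ and $\delta\approx 4.7\times 10^{-4}$, comfortably inside the claimed $10^4$. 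What each approach buys: the paper's induction mirrors the leaf-removal structure used for variance tensorization in \cref{sec:variance} and only needs soft two-by-two linear algebra, whereas your argument gives an exact description of $M$ and $M^{-1}$, makes transparent why row-sum ($\ell_1/\ell_\infty$) bounds must fail, and actually works for all fugacities up to $e/2\approx 1.359$, slightly beyond the paper's $1.3$. The only things left implicit are routine: the full verification of $LM=I$ on a general tree, the sign bookkeeping, and the degenerate $\lambda_v=0$ vertices — all of which go through as you indicate.
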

 We will prove~\cref{lep} by induction; we will prove the following strengthened statement.
 \begin{lemma}\label{lep2}
 For any forest $T$ with fugacities in $[0,1.3]$ the symmetrized
 influence matrix $M$ of the hard-core model on $T$ satisfies  
 $$
 M\preceq{\mathrm{diag}}\Big(\frac{400}{1.3 - \lambda_v^{0.78}}, \ v\in V \Big)  .
 $$
 \end{lemma}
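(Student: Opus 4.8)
\textbf{Proof proposal for \cref{lep2}.}
The plan is to prove \cref{lep2} by induction on the number of vertices of the forest $T$, peeling off a leaf exactly as in the variance argument of \cref{sec:variance}. Let $v$ be a degree-$1$ vertex of $T$ with unique neighbor $u$, and let $T_0$ be the forest on $V\setminus\{v\}$ obtained by deleting $v$, with the modified fugacity $\lambda_u^{(0)}=\lambda_u/(1+\lambda_v)$ at $u$ and all other fugacities unchanged. As noted in \cref{sec:variance}, marginalizing out $v$ turns $\mu$ on $T$ into $\mu^{(0)}$ on $T_0$, so all influences ${\mbox{Inf}}(a\ra b)$ with $a,b\in V\setminus\{v\}$ agree between the two models; in particular the symmetrized influence matrix $M_0$ of $T_0$ is exactly the principal submatrix of $M$ on rows/columns in $V\setminus\{v\}$, \emph{except} for the diagonal-scaling factors $D$, which change only at $u$ and $v$. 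I would first record the elementary facts: (i) since $v$ is a leaf adjacent only to $u$, ${\mbox{Inf}}(v\ra b)={\mbox{Inf}}(b\ra v)=0$ for every $b\ne u,v$ (conditioning on $v$ affects the rest of the forest only through $u$, and on a tree $v$'s only neighbor is $u$ — more precisely the marginal at $b$ given the state of $v$ factors through $u$, but a direct computation shows the pairwise influence between $v$ and a non-neighbor $b$ is zero because removing $v$ leaves $\mu_{V\setminus\{v,b\}}$-conditional independence), so $M$ is block-structured: the only new entries are $M_{vv}=0$ (influence of a vertex on itself is not part of $\mathcal I_G$; here the diagonal of $\mathcal I_G$ is conventionally $0$ or we use the convention from the definition) and the symmetric pair $M_{uv}=M_{vu}=\pm\sqrt{{\mbox{Inf}}(u\ra v){\mbox{Inf}}(v\ra u)}$.

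The core of the argument is then a rank-one-plus-correction comparison. Writing $M$ in block form with the $\{v\}$ block separated, I want to show
\[
M\preceq \mathrm{diag}\!\Big(\tfrac{400}{1.3-\lambda_a^{0.78}}: a\in V\setminus\{v\}\Big)\ \oplus\ \tfrac{400}{1.3-\lambda_v^{0.78}},
\]
given the inductive hypothesis $M_0\preceq \mathrm{diag}\big(\tfrac{400}{1.3-\lambda_a^{0.78}}: a\in V\setminus\{v\},\ \lambda_u \text{ replaced by }\lambda_u^{(0)}\big)$. The point is that $M$ differs from $M_0\oplus 0$ by: a change in the $(u,u)$ diagonal entry coming from the change $\lambda_u\to\lambda_u^{(0)}$ in the $D$-scaling of row/column $u$ (note $M$'s off-diagonal entries in row $u$ to vertices $\ne v$ scale by $\big(\tfrac{\mu(u\in I)\mu(u\notin I)\text{ in }T}{\mu^{(0)}(u\in I)\mu^{(0)}(u\notin I)\text{ in }T_0}\big)^{1/2}$, a bounded factor), plus the new off-diagonal pair at $(u,v)$, plus the new diagonal budget at $v$. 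I would bound the new $2\times 2$ block on $\{u,v\}$ directly: using \eqref{infbound}, $|{\mbox{Inf}}(v\ra u)|\le \lambda_u/(1+\lambda_u)$ and $|{\mbox{Inf}}(u\ra v)|\le \lambda_v/(1+\lambda_v)$, and using \eqref{e3o} to write ${\mbox{Inf}}(u\ra v)$ in terms of ${\mbox{Inf}}(v\ra u)$ and the Bernoulli variances. A Schur-complement / $2\times 2$ semidefiniteness check then reduces the whole inductive step to a scalar inequality: the diagonal budget $\tfrac{400}{1.3-\lambda_v^{0.78}}$ at $v$, together with the \emph{slack} gained at $u$ because $\lambda_u^{(0)}=\lambda_u/(1+\lambda_v)<\lambda_u$ makes $\tfrac{400}{1.3-(\lambda_u^{(0)})^{0.78}}$ smaller than the corresponding term for $\lambda_u$ would be, must dominate the contribution of the new edge $\{u,v\}$. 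This is the exact spectral analogue of \cref{lem:exp-induct}, and the exponent $0.78$ and the constant $400$ are tuned precisely so that this scalar inequality holds for all $\lambda_u,\lambda_v\in(0,1.3]$.

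The main obstacle, as in the variance proof, is establishing that final scalar inequality over the whole parameter range $\lambda_v\in(0,1.3]$, $\lambda_u\in(0,1.3]$; I expect this to require a concavity or monotonicity reduction (reducing to an endpoint or a critical point, as in \cref{lem:eeee1}) followed by an interval-arithmetic / Sturm-sequence verification, since $\lambda_v^{0.78}$ near the endpoint $\lambda_v=1.3$ is what forces the choice of constants. A secondary technical point is handling the $D$-rescaling carefully: one must verify the ratio of Bernoulli variances $\mu(u\in I)\mu(u\notin I)$ in $T$ versus $\mu^{(0)}(u\in I)\mu^{(0)}(u\notin I)$ in $T_0$ is controlled (it is, since both marginals of $u$ lie in a fixed subinterval of $(0,1)$ determined by the fugacity bounds, and $\lambda_u^{(0)}\in[\lambda_u/(2.3),\lambda_u]$), so that the off-diagonal entries of $M$ in row $u$ do not blow up relative to those of $M_0$. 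Once \cref{lep2} is proved, \cref{lep} follows immediately: $M\preceq \mathrm{diag}(400/(1.3-\lambda_v^{0.78}))\preceq \mathrm{diag}(400/(1.3-1.3^{0.78}))\le 10000 I$ since $1.3-1.3^{0.78}>0.04$, and the influence matrix $\mathcal I_G$ is similar to $M$ hence has the same spectral radius.
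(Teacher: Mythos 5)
There is a genuine gap in your proposal: the claim in item (i) that ${\mbox{Inf}}(v\ra b)={\mbox{Inf}}(b\ra v)=0$ for every non-neighbor $b$ is \emph{false}. Influence propagates through the tree: conditioning on the state of the leaf $v$ shifts the marginal at $u$, which in turn shifts marginals at vertices arbitrarily far away. The correct fact, which the paper uses and which is the structural heart of its proof, is the \emph{factorization}
\[
{\mbox{Inf}}(v\ra w)={\mbox{Inf}}(v\ra u)\,{\mbox{Inf}}(u\ra w),\qquad
{\mbox{Inf}}(w\ra v)={\mbox{Inf}}(w\ra u)\,{\mbox{Inf}}(u\ra v),
\]
which makes the $v$ row/column of $M$ equal to $\tau$ times the $u$ row/column (off the $2\times2$ block), where $\tau^2={\mbox{Inf}}(u\ra v){\mbox{Inf}}(v\ra u)$. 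Your incorrect block-diagonal picture makes the $2\times 2$ Schur-complement step look immediate, but with the actual matrix the entries $\tau z$ in the $v$ row are not zero. The paper handles them with a row/column operation (subtract $\tau$ times the $u$-th row/column from the $v$-th), which is congruence (not merely similarity) and hence preserves positive semidefiniteness; \emph{after} that elimination one is left with a genuine $2\times2$ block on $\{u,v\}$ plus the ``known'' part, and only then does the problem reduce to the scalar determinant inequality. Without the factorization and the row operation, your inductive step would not close.

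A secondary point: your worry about the $D$-rescaling at $u$ (``the ratio of Bernoulli variances $\mu(u\in I)\mu(u\notin I)$ in $T$ versus $T_0$ is controlled'') is unfounded but also reveals a misunderstanding of the setup. Because $\mu^{(0)}$ is defined as the marginal of $\mu$ on $V\setminus\{v\}$, and the modified fugacity $\lambda_u^{(0)}=\lambda_u/(1+\lambda_v)$ is chosen precisely so that the hard-core model on $T_0$ equals that marginal, the one-vertex marginal of $u$ (and hence $D_{uu}$) is \emph{identical} in both models. There is no rescaling of the off-diagonal entries of row $u$; the submatrix of $M$ obtained by deleting row/column $v$ is exactly $M_0$, full stop. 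Your high-level strategy — induct by peeling a leaf, reduce to a $2\times2$ PSD check, and verify a scalar inequality tuned by the constants $400$ and $0.78$, then specialize at $\lambda=1.3$ to get the $10000$ bound in \cref{lep} — does match the paper's, but the two structural facts above (the influence factorization and the exactness of marginalization) are what make that strategy actually go through, and both are either missing or stated incorrectly in your write-up.
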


\begin{proof}
 Let $T=(V,E)$ be a forest. Let $v$ be a vertex of degree $1$ in $T$. Let $u$ be the neighbor of $v$ in $T$. Let $T'=(V',E')$ be $T$ with $v$ removed.
 Let $\lambda'_i = \lambda_i$ for $i\in V'\setminus\{u\}$. Let $\lambda'_u = \lambda_u / (1+\lambda_v)$. Let $\mu'$ be the hard-core model on
 $T'$ with fugacities $\{\lambda'_i\}_{i\in V'}$. Note that $\mu'$ is the same as the marginalization
 of $\mu$ to $V'$, that is, for an independent set $I$ of $T'$ we have
 \begin{equation}\label{e2}
 \mu'(I) = \sum_{J\supseteq I} \mu(J)  ,
 \end{equation}
 where $J$ ranges over independent sets of $T$ that contain $I$.

 Let $M$ be the symmetrized influence matrix for $\mu$ and let $M'$ be the symmetrized influence matrix for $\mu'$. Note that~\eqref{e2} implies
 that $M'$ is a submatrix of $M$ (removing the column and row of $M$ corresponding to vertex $v$ yields $M'$).

It is standard to show, e.g., see \cite[Lemma B.3]{ALO20},  that
 \begin{equation}\label{e3}
 {\mbox{Inf}}(v\ra w) = {\mbox{Inf}}(v\ra u) {\mbox{Inf}}(u\ra w)  ,
 \end{equation}
 and
 \begin{equation}\label{e4}
 {\mbox{Inf}}(w\ra v) = {\mbox{Inf}}(w\ra u) {\mbox{Inf}}(u\ra v)  .
 \end{equation}
 Let $\tau^2={\mbox{Inf}}(v\ra u){\mbox{Inf}}(u\ra v)$. Note that, using~\eqref{infbound}, we have
 \begin{equation}\label{tau}
 \begin{split}
 \tau^2 =  {\mbox{Inf}}(u\ra v) {\mbox{Inf}}(v\ra u) \leq \frac{\lambda_v}{1+\lambda_v} \frac{\lambda_u}{1+\lambda_u}  .
 \end{split}
 \end{equation}
 From~\eqref{e3} and~\eqref{e4} we have that $M$ and $M'$ have the following form ($Q$ is a matrix and $z$ is a vector):
 $$
 M'=\left(
     \begin{array}{cc}
       Q & z \\
       z^T & 0 \\
     \end{array}
   \right)
 \quad\mbox{and}\quad
 M =\left(
   \begin{array}{ccc}
     Q & z & \tau z \\
     z^T & 0 & \tau \\
     \tau z^T & \tau & 0 \\
   \end{array}
 \right)  .
 $$
 Let $F$ be an increasing function on $[0,1.3]$. We will take $F(x)=1/H(x)$ where $H(x)=(a-x^b)/400$ and $a=1.3$, $b=0.78$.
 (We will keep $a$ and $b$ as variables to elucidate the choice of $a$ and $b$ in the proof below.)
Suppose that we know $M'  \preceq {\mathrm{diag}}(F(\lambda'_i))$ and we want to conclude $M\preceq {\mathrm{diag}}(F(\lambda_i))$.
 Let $W$ be a diagonal matrix with entries $F(\lambda_i)$ for $i\in V\setminus\{u,v\}$.  We can restate our goal as follows.
 $$
 \mbox{KNOW:}\left(
     \begin{array}{cc}
       W - Q & -z \\
       -z^T & F\Big(\frac{\lambda_u}{1+\lambda_v}\Big) \\
     \end{array}
   \right)\succeq 0
 \quad\quad
 \mbox{WANT:}\left(
   \begin{array}{ccc}
     W - Q & -z & -\tau z \\
     -z^T & F(\lambda_u) & -\tau \\
     -\tau z^T & -\tau & F(\lambda_v) \\
   \end{array}
 \right)\succeq 0  .
 $$
 Since $F$ will be an increasing function, we have $F(\lambda_u / (1+\lambda_v))<F(\lambda_u)$.

 The condition we want is equivalent (by applying row and column operations, specifically subtracting $u$-th row/column times $\tau$ from the $v$-th row/column) to
 $$\left(
   \begin{array}{ccc}
     W - Q & -z & 0 \\
     -z^T & F(\lambda_u) & -\tau(1+F(\lambda_u))  \\
     0 & -\tau(1+F(\lambda_u)) & F(\lambda_v) + 2\tau^2 + \tau^2 F(\lambda_u)\\
   \end{array}
 \right)\succeq 0  .
 $$
 If we show
 \begin{equation}\label{zzzz}
 \left(
   \begin{array}{cc}
     F(\lambda_u) - F(\lambda_u / (1+\lambda_v)) & -\tau(1+F(\lambda_u)) \\
     -\tau(1+F(\lambda_u)) & F(\lambda_v) + 2\tau^2 + \tau^2 F(\lambda_u) \\
   \end{array}
 \right)\succeq 0 , 
 \end{equation}
 then the conclusion will follow (adding the ``know'' positive semidefinite matrix and~\eqref{zzzz} (expanded with zeros to match dimensions)
 yields that the ``want'' matrix is positive semidefinite). Equation~\eqref{zzzz} is equivalent to checking if the determinant is positive (since
 the entry in the first row/column is positive), that is, we need to check
 $$
 \left(F(\lambda_u) - F(\lambda_u / (1+\lambda_v))\right) F(\lambda_v) > \tau^2 \left(1 + F(\lambda_u / (1+\lambda_v))(2 + F(\lambda_u))  \right)  .
 $$
 Hence it is sufficient (using~\eqref{tau}) to show
 $$
 \left(F(\lambda_u) - F\Big(\frac{\lambda_u}{1+\lambda_v}\Big)\right) F(\lambda_v) > \frac{\lambda_v}{1+\lambda_v} \frac{\lambda_u}{1+\lambda_u} \left(1 + F\Big(\frac{\lambda_u}{1+\lambda_v}\Big)(2 + F(\lambda_u))  \right)  .
 $$
 Letting $H(x)=1/F(x)$ the condition becomes the following
 \begin{equation}\label{ooop}
 \left( H\Big(\frac{\lambda_u}{1+\lambda_v}\Big) - H(\lambda_u)\right)  > H(\lambda_v) \frac{\lambda_v}{1+\lambda_v} \frac{\lambda_u}{1+\lambda_u}
 \left( H\Big(\frac{\lambda_u}{1+\lambda_v}\Big)H(\lambda_u) + 2H(\lambda_u) + 1 \right)  .
 \end{equation}
 We are going to search for $H$ that is 1) bounded from above by $1/300$, 2) bounded away from $0$ on $[0,1.3]$ and 3) that satisfies
 \begin{equation}\label{zzz}
 \left( H\Big(\frac{\lambda_u}{1+\lambda_v}\Big) - H(\lambda_u)\right)  > H(\lambda_v) \frac{\lambda_v}{1+\lambda_v} \frac{\lambda_u}{1+\lambda_u} (1+1/100)  .
 \end{equation}
 Note that such $H$ will also satisfy~\eqref{ooop} since $H\Big(\frac{\lambda_u}{1+\lambda_v}\Big)H(\lambda_u) + 2H(\lambda_u)\leq 1/100$ (here the choice of $1/100$
 is arbitrary; we just need something sufficiently small).

 We will search for $H$ of the following form: $H(x) \propto a - x^b$.  Note that~\eqref{zzz} is invariant under scaling of $H$ and hence satisfying the upper bound
 of $1/300$ can be achieved by scaling of $H$. Ultimately the price we will pay for $H$ being small is that $F$ is big and hence we get a weaker upper bound on
 the spectral radius of $M$; we do not optimize the constants at this point. Consider the following function $L$ (obtained as LHS of~\eqref{zzz} divided by RHS of~\eqref{zzz}, excluding
 the constant term $1+1/100$):
 \begin{equation}\label{zzz2}
 L(\lambda_u,\lambda_v):=  \frac{\lambda_u^b (1 + \lambda_u)}{\lambda_u} \frac{ (1+\lambda_v)^b - 1 }{(a - \lambda_v^b) \lambda_v  (1+\lambda_v)^{b-1}}  .
 \end{equation}
 The minimum of the first term in~\eqref{zzz2} is attained for $\lambda_u=(1-b)/b$ and hence setting $b=.78$ we have that:
 \begin{equation}\label{first111}
 \frac{\lambda_u^b (1 + \lambda_u)}{\lambda_u} \geq \frac{(1-b)^{b-1}}{b^b} \geq \frac{1.32}{.78}.
 \end{equation} 
 For the second
 term in~\eqref{zzz2} by setting $a=1.3$ we have the following:
 \begin{equation}\label{zzz3}
 \frac{ (1+\lambda_v)^b - 1 }{(a - \lambda_v^b) \lambda_v  (1+\lambda_v)^{b-1}}\geq
 \frac{ (1+\lambda_v)^b - 1 }{a \lambda_v  (1+\lambda_v)^{b-1}} \geq \frac{b}{a} = \frac{.78}{1.3}.
 \end{equation}
 The second inequality in~\eqref{zzz3} is obtained  from the following inequality which is valid for $b\in [0,1]$ and $x\geq 0$:
 \begin{equation}\label{ber}
 (1+x)^{b} - 1\geq b x (1+x)^{b-1}.
 \end{equation}
For the above, it suffices to prove that
\begin{align}
 (1+x)^{b} - b x (1+x)^{b-1}\geq 1.
\end{align}
We have that for $b\leq 1$:
\begin{align*}
\nonumber (1+x)^{b} - b x (1+x)^{b-1}&=(1+x)^{b-1}(1 - (b-1)x) \\
\nonumber
 &\geq (1+x)^{b-1}(1+x)^{-(b-1)} & \mbox{by Bernoulli's inequality}
 \\
 & \geq 1.
\end{align*}

Finally, plugging in~\eqref{first111} and~\eqref{zzz3} into~\eqref{zzz2} we obtain:
 \begin{equation}\label{fiz}
 L(\lambda_u,\lambda_v)\geq \frac{1.32}{1.3}>1.01.
 \end{equation}
 Equation~\eqref{fiz} implies that~\eqref{zzz} is satisfied. Recall that the statement of the lemma assumed that the fugacities
 are in the interval $[0,1.3]$. For $\lambda\in [0,1.3]$ we have
 \begin{equation}
 H(\lambda) = \frac{a - \lambda^b}{400} \geq \frac{1}{10000} \quad\mbox{and}\quad H(\lambda) = \frac{a - \lambda^b}{400}\leq \frac{1}{300}  .
 \end{equation}
 This implies that $H$ satisfies~\eqref{ooop} and hence for
 $$
 F(\lambda) = \frac{1}{H(\lambda)} \leq 10000,
 $$
we have~\eqref{zzzz} and this completes the proof of~\cref{lep2} by induction.
\end{proof}

We can now complete the proof for spectral independence.

\begin{proof}[Proof of \cref{thm:SI-mixing}]
We apply Theorem 1.12 of \cite{CLV21} which says that if for all pinnings we have $\eta$-spectral independence  and $b$-marginally boundedness then the mixing time of the Glauber dynamics is $C(\eta,\Delta,b)n\log{n}$.  A pinning refers to a fixed configuration $\tau$ on a subset $S$ of vertices; for the hard-core model a pinning of a tree $T$ corresponds to the hard-core model on a forest which is an induced subgraph.  Hence, \cref{lep} implies that $\eta$-spectral independence holds for all pinnings with $\eta=10000$.  The condition $b$-marginally
boundedness (see Definition 1.9 in~\cite{CLV21}) says that for every pinning $\tau$ on a subset $S\subset V$, for every vertex $v\notin S$, for every assignment to $v$ denoted as $\sigma(v)$ which has positive probability in the conditional Gibbs distribution $\mu_\tau$, then the marginal probability is lower bounded as $\mu_\tau(\sigma(v))\geq b$.  This holds for $b\geq \min\{1,\lambda\}/[\lambda+(1+\lambda)^\Delta]$.  Hence, \cite[Theorem 1.12]{CLV21} implies \cref{thm:SI-mixing}.
\end{proof}

\section{ Proof of \Cref{cor:relaxationBoost}}

We prove \Cref{cor:relaxationBoost} by combining  the spectral independence result in \Cref{lep} and \Cref{thm:relaxation},
while we utilise  \cite[Theorem~1.9]{chen2024rapid}.

In \cite{chen2024rapid},  they introduce the notion of ``complete $\eta$-spectral independence". 
Let $\mu$ be the hard-core model on graph $G=(V,E)$ with  fugacity $\lambda$. For  $\eta>0$, 
complete $\eta$-spectral independence  for $\mu$ corresponds to  the following condition: For any induced subgraph $G'$ of $G$, for the  hard-core model  $\mu'$ on $G'$ 
such that each vertex $v\in V$ has fugacity $\lambda'_v\leq \lambda$, the corresponding influence matrix $\mathcal{I}_{G'}$ has   spectral 
radius at most $\eta$.

\Cref{lep} is equivalent to complete $\eta$-spectral independence for all $\lambda\leq 1.3$ with $\eta\leq 10000$.
We can now prove \Cref{cor:relaxationBoost}.

\begin{proof}[Proof of \Cref{cor:relaxationBoost}:]
For the forest $T$, let $\mu$ be the hard-core model with fugacity $\lambda\leq 1.3$.   Also, let $\gamma$ be the spectral gap for Glauber dynamics on
$\mu$.  \Cref{cor:relaxationBoost} follows by showing  that $\gamma=\Omega(n^{-1})$. 

First, note that  if $\lambda<1.1$, then \Cref{thm:relaxation} already implies $\gamma=\Omega(n^{-1})$. We now focus on the case where $\lambda\in [1.1,1.3]$.

For the same forest $T$,  let $\widehat{\mu}$ be the hard-core model on $T$ with fugacity $\widehat{\lambda}=1$.   Let  $\widehat{\gamma}$ be the spectral
gap for Glauber dynamics on $\widehat{\mu}$. From \Cref{thm:relaxation},  we have that   
\begin{align}\label{eq:SPsmall}
\widehat{\gamma}=\Omega(n^{-1}).
\end{align}
\Cref{lep} and \cite[Theorem~1.9]{chen2024rapid} imply the following relation between $\gamma$ and $\widehat{\gamma}$:
for $\theta=\lambda$ and $\eta \leq 10000$, we have
\begin{align}
\gamma &\geq \left( \frac{\theta}{2}\right)^{2\eta+7}  \widehat{\gamma} 
\  > \left( \frac{1}{2}\right)^{20007}  \widehat{\gamma} \label{eq:SPLargeVsSmall}.
\end{align}
From  \eqref{eq:SPsmall} and  \eqref{eq:SPLargeVsSmall},  we get that $\gamma=\Omega(n^{-1})$, for any $\lambda\in [1.1,1.3]$.

From the above we have established  \Cref{cor:relaxationBoost}.
\end{proof}

\section*{Acknowledgements}

We thank Zongchen Chen for pointing out the counterexample to entropy tensorization as discussed in \Cref{rem:mistake}, and the anonymous referee for pointing out \Cref{cor:relaxationBoost}.

\bibliographystyle{alpha}
\bibliography{refs}

\newcommand{\etalchar}[1]{$^{#1}$}
\begin{thebibliography}{CLMM23}

\bibitem[ALO21]{ALO20}
Nima Anari, Kuikui Liu, and Shayan {Oveis Gharan}.
\newblock Spectral independence in high-dimensional expanders and applications
  to the hardcore model.
\newblock {\em SIAM Journal on Computing}, 2021.

\bibitem[BCC{\etalchar{+}}22]{BCCPSV22}
Antonio Blanca, Pietro Caputo, Zongchen Chen, Daniel Parisi, Daniel
  {\v{S}}tefankovi\v{c}, and Eric Vigoda.
\newblock On {M}ixing of {M}arkov {C}hains: {C}oupling, {S}pectral
  {I}ndependence, and {E}ntropy {F}actorization.
\newblock In {\em Proceedings of the 33rd Annual {ACM-SIAM} Symposium on
  Discrete Algorithms (SODA)}, pages 3670--3692, 2022.

\bibitem[BKMP05]{BKMP}
Noam Berger, Claire Kenyon, Elchanan Mossel, and Yuval Peres.
\newblock Glauber dynamics on trees and hyperbolic graphs.
\newblock {\em Probability Theory and Related Fields}, 131(3):311–340, 2005.

\bibitem[BST16]{BST16}
Nayantara Bhatnagar, Allan Sly, and Prasad Tetali.
\newblock Decay of correlations for the hardcore model on the $d$-regular
  random graph.
\newblock {\em Electron. J. Probab.}, 21:1--42, 2016.

\bibitem[BW04]{BW}
Graham Brightwell and Peter Winkler.
\newblock A second threshold for the hard-core model on a {B}ethe lattice.
\newblock {\em Random Structures and Algorithms}, 24:303–314, 2004.

\bibitem[Cap23]{Caputo-notes}
Pietro Caputo.
\newblock Lecture notes on {E}ntropy and {M}arkov {C}hains.
\newblock {\em Preprint, available from:
  \href{http://www.mat.uniroma3.it/users/caputo/entropy.pdf}{http://www.mat.uniroma3.it/users/caputo/entropy.pdf}},
  2023.

\bibitem[CE22]{CE22}
Yuansi Chen and Ronen Eldan.
\newblock Localization schemes: A framework for proving mixing bounds for
  {M}arkov chains.
\newblock In {\em Proceedings of the 63rd Annual IEEE Symposium on Foundations
  of Computer Science (FOCS)}, pages 110--122, 2022.

\bibitem[CFYZ22]{CFYZ22}
Xiaoyu Chen, Weiming Feng, Yitong Yin, and Xinyuan Zhang.
\newblock Optimal mixing for two-state anti-ferromagnetic spin systems.
\newblock In {\em Proceedings of the 63rd Annual IEEE Symposium on Foundations
  of Computer Science (FOCS)}, pages 588--599, 2022.

\bibitem[CFYZ24]{chen2024rapid}
Xiaoyu Chen, Weiming Feng, Yitong Yin, and Xinyuan Zhang.
\newblock Rapid mixing of {G}lauber dynamics via spectral independence for all
  degrees.
\newblock {\em SIAM Journal on Computing}, 2024.

\bibitem[CG24]{CG23}
Zongchen Chen and Yuzhou Gu.
\newblock Fast sampling of $b$-matchings and $b$-edge covers.
\newblock In {\em Proceedings of the 35th Annual {ACM-SIAM} Symposium on
  Discrete Algorithms (SODA)}, pages 4972--4987, 2024.

\bibitem[Che24]{Zongchen-SODA}
Zongchen Chen.
\newblock Combinatorial approach for factorization of variance and entropy in
  spin systems.
\newblock In {\em Proceedings of the 35th Annual {ACM-SIAM} Symposium on
  Discrete Algorithms (SODA)}, pages 4988--5012, 2024.

\bibitem[CLMM23]{CLMM23}
Zongchen Chen, Kuikui Liu, Nitya Mani, and Ankur Moitra.
\newblock Strong spatial mixing for colorings on trees and its algorithmic
  applications.
\newblock In {\em Proceedings of the 64th Annual IEEE Symposium on Foundations
  of Computer Science (FOCS)}, pages 810--845, 2023.

\bibitem[CLV21]{CLV21}
Zongchen Chen, Kuikui Liu, and Eric Vigoda.
\newblock Optimal mixing of {G}lauber dynamics: Entropy factorization via
  high-dimensional expansion.
\newblock In {\em Proceedings of the 53rd Annual ACM Symposium on Theory of
  Computing (STOC)}, pages 1537--1550, 2021.

\bibitem[CLV22]{CLV21zerofree}
Zongchen {Chen}, Kuikui {Liu}, and Eric {Vigoda}.
\newblock Spectral independence via stability and applications to {H}olant-type
  problems.
\newblock In {\em Proceedings of the 63rd Annual IEEE Symposium on Foundations
  of Computer Science (FOCS)}, pages 149--160, 2022.

\bibitem[CMT15]{CMT14}
Pietro Caputo, Georg Menz, and Prasad Tetali.
\newblock Approximate tensorization of entropy at high temperature.
\newblock {\em Annales de la Facult\'e des sciences de Toulouse:
  Math\'ematiques}, 24(4):691--716, 2015.

\bibitem[CP21]{CP20}
Pietro Caputo and Daniel Parisi.
\newblock Block factorization of the relative entropy via spatial mixing.
\newblock {\em Communications in Mathematical Physics}, 388:793--818, 2021.

\bibitem[CYYZ24]{CYYZ}
Xiaoyu Chen, Xiongxin Yang, Yitong Yin, and Xinyuan Zhang.
\newblock Spectral independence beyond total influence on trees and related
  graphs.
\newblock {\em arXiv preprint arXiv:2404.04668}, 2024.

\bibitem[DHP20]{DHP20}
Michelle Delcourt, Marc Heinrich, and Guillem Perarnau.
\newblock The {G}lauber dynamics for edge-colorings of trees.
\newblock {\em Random Structures \& Algorithms}, 57(4):1050--1076, 2020.

\bibitem[EF23]{EF22}
David Eppstein and Daniel Frishberg.
\newblock Rapid mixing of the hardcore {G}lauber dynamics and other {M}arkov
  chains in bounded-treewidth graphs.
\newblock In {\em Proceedings of the 34th International Symposium on Algorithms
  and Computation (ISAAC)}, pages 30:1--30:13, 2023.

\bibitem[EH{\v{S}}V23]{arxiv-version}
Charilaos Efthymiou, Thomas~P. Hayes, Daniel {\v{S}}tefankovi\v{c}, and Eric
  Vigoda.
\newblock Optimal mixing via tensorization for random independent sets on
  arbitrary trees.
\newblock {\em arXiv preprint arXiv:2307.07727, version 2}, 2023.

\bibitem[G{\v{S}}V16]{GSV16}
Andreas Galanis, Daniel {\v{S}}tefankovi\v{c}, and Eric Vigoda.
\newblock Inapproximability of the partition function for the antiferromagnetic
  {I}sing and hard-core models.
\newblock {\em Combinatorics, Probability and Computing}, 25(4):500--559, 2016.

\bibitem[HS07]{HS07}
Thomas~P. Hayes and Alistair Sinclair.
\newblock {A general lower bound for mixing of single-site dynamics on graphs}.
\newblock {\em The Annals of Applied Probability}, 17(3):931 -- 952, 2007.

\bibitem[Liu21]{Liu21}
Kuikui Liu.
\newblock From coupling to spectral independence and blackbox comparison with
  the down-up walk.
\newblock In {\em APPROX-RANDOM}, pages 32:1--32:21, 2021.

\bibitem[LMP09]{LMP09}
Brendan Lucier, Michael Molloy, and Yuval Peres.
\newblock The {G}lauber dynamics for colourings of bounded degree trees.
\newblock In {\em Randomization and Approximation Techniques in Computer
  Science (RANDOM)}, pages 631--645, 2009.

\bibitem[Mar03]{Martin}
James~B. Martin.
\newblock Reconstruction thresholds on regular trees.
\newblock In {\em Discrete Random Walks}, DIMACS Series in Discrete Mathematics
  and Theoretical Computer Science, page 191–204, 2003.

\bibitem[Mos04]{Mossel:survey}
Elchanan Mossel.
\newblock Survey: Information flow on trees.
\newblock In {\em Graphs, Morphisms and Statistical Physics}, DIMACS Series in
  Discrete Mathematics and Theoretical Computer Science, page 155–170, 2004.

\bibitem[MSW03]{MSW03}
Fabio {Martinelli}, Alistair {Sinclair}, and Dror {Weitz}.
\newblock The {I}sing model on trees: boundary conditions and mixing time.
\newblock In {\em Proceedings of the 44th Annual IEEE Symposium on Foundations
  of Computer Science (FOCS)}, pages 628--639, 2003.

\bibitem[MSW04]{MSW04}
Fabio Martinelli, Alistair Sinclair, and Dror Weitz.
\newblock Fast mixing for independent sets, colorings and other models on
  trees.
\newblock In {\em Proceedings of the 15th Annual ACM-SIAM Symposium on Discrete
  Algorithms (SODA)}, pages 456--465, 2004.

\bibitem[MWW07]{MWW07}
Elchanan Mossel, Dror Weitz, and Nicholas~C. Wormald.
\newblock On the hardness of sampling independent sets beyond the tree
  threshold.
\newblock {\em Probability Theory and Related Fields}, 143:401--439, 2007.

\bibitem[R{\v{S}}V{\etalchar{+}}14]{RSVVY}
Ricardo Restrepo, Daniel {\v{S}}tefankovi\v{c}, Juan~C. Vera, Eric Vigoda, and
  Linji Yang.
\newblock Phase transition for {G}lauber dynamics for independent sets on
  regular trees.
\newblock {\em SIAM Journal on Discrete Mathematics}, 28:835--861, 2014.

\bibitem[Sly10]{Sly10}
Allan Sly.
\newblock Computational transition at the uniqueness threshold.
\newblock In {\em Proceedings of the 51st Annual IEEE Symposium on Foundations
  of Computer Science (FOCS)}, pages 287--296, 2010.

\bibitem[SS14]{SS14}
Allan Sly and Nike Sun.
\newblock The computational hardness of counting in two-spin models on
  $d$-regular graphs.
\newblock {\em The Annals of Probability}, 42(6):2383--2416, 2014.

\bibitem[{\v{S}}VV09]{SVV}
Daniel {\v{S}}tefankovi{\v{c}}, Santosh Vempala, and Eric Vigoda.
\newblock Adaptive simulated annealing: A near-optimal connection between
  sampling and counting.
\newblock {\em Journal of the ACM}, 56(3):1--36, 2009.

\end{thebibliography}

\end{document}